\documentclass[11pt,a4]{article}
\pdfoutput=1

\usepackage{microtype}

\usepackage[T1]{fontenc}
\usepackage{authblk}
\usepackage{amsthm}
\usepackage{amssymb}
\usepackage[english]{babel}
\usepackage[utf8]{inputenc}
\usepackage{amsmath}
\usepackage{amsfonts}
\usepackage{mathtools}
\usepackage{enumitem}
\usepackage{thmtools}
\usepackage{thm-restate}
\usepackage{fullpage}

\usepackage{clrscode3e}
\usepackage[noend, noline, ruled, linesnumbered]{algorithm2e}
\SetAlFnt{\normalsize}
\DontPrintSemicolon
\SetKwComment{Comment}{$\triangleright$\ }{}
\SetKwProg{Def}{def}{:}{}
\SetArgSty{textnormal}
\SetKwRepeat{Do}{do}{while}
\SetKwComment{Comment}{$\triangleright$\ }{}

\usepackage{color}
\definecolor{darkgreen}{rgb}{0,0.5,0}
\definecolor{darkblue}{rgb}{0,0,0.8}
\definecolor{darkred}{rgb}{0.8,0,0}

\usepackage{hyperref}
\hypersetup{
   unicode=false,          
   colorlinks=true,        
   linkcolor=darkred,          
   citecolor=darkblue,        
   filecolor=magenta,      
   urlcolor=black           
}

\bibliographystyle{plainurl}

\newcommand{\bigo}{\mathcal{O}}
\newcommand{\PhiS}{\Phi^{*}}
\newcommand{\LambdaS}{\Lambda^{*}}
\newcommand{\sample}{S}

\newcommand{\vweight}{\omega}
\newcommand{\liecount}[1]{\ell_{#1}}

\newcommand{\target}{v^{*}}
\newcommand{\E}{\mathbb{E}}
\newcommand{\T}{\tau}

\theoremstyle{plain}
\newtheorem{definition}{Definition}[section]
\newtheorem{lemma}[definition]{Lemma}
\newtheorem{theorem}[definition]{Theorem}

\newtheorem{observation}[definition]{Observation}
\newtheorem{corollary}[definition]{Corollary}

\newtheorem{lemma-restate}[definition]{Lemma}

\title{\bf An Efficient Noisy Binary Search in Graphs via Median Approximation}




\author[1]{\Large Dariusz Dereniowski\thanks{Partially supported by National Science Centre (Poland) grant number 2018/31/B/ST6/00820.}}
\author[2]{Aleksander \L{}ukasiewicz}
\author[2]{Przemys\l{}aw~Uzna\'nski}
\affil[1]{\small Faculty of Electronics, Telecommunications and Informatics, Gda\'{n}sk~University~of~Technology,~Poland}
\affil[2]{Institute of Computer Science, University~of~Wroc\l{}aw,~Poland}
\date{}

\begin{document}

\maketitle

\setcounter{page}{0}
\thispagestyle{empty}

\begin{abstract}
Consider a generalization of the classical binary search problem in linearly sorted data to the graph-theoretic setting.
The goal is to design an adaptive query algorithm, called a \emph{strategy}, that identifies an initially unknown \emph{target} vertex in a graph by asking queries.
Each query is conducted as follows: the strategy selects a vertex $q$ and receives a reply $v$: if $q$ is the target, then $v=q$, and if $q$ is not the target, then $v$ is a neighbor of $q$ that lies on a shortest path to the target.
Furthermore, there is a noise parameter $0\leq p<\frac{1}{2}$, which means that each reply can be incorrect with probability $p$.
The optimization criterion to be minimized is the overall number of queries asked by the strategy, called the \emph{query complexity}.
The query complexity is well understood to be $\bigo(\varepsilon^{-2}\log n)$ for general graphs, where $n$ is the order of the graph and $\varepsilon=\frac{1}{2}-p$.
However, implementing such a strategy is computationally expensive, with each query requiring possibly $\bigo(n^2)$ operations.

In this work we propose two efficient strategies that keep the optimal query complexity.
The first strategy achieves the overall complexity of $\bigo(\varepsilon^{-1}n\log n)$ per a single query.
The second strategy is dedicated to graphs of small diameter $D$ and maximum degree $\Delta$ and has the average complexity of $\bigo(n+\varepsilon^{-2}D\Delta\log n)$ per query.
We stress out that we develop an algorithmic tool of graph median approximation that is of independent interest: the median can be efficiently approximated by finding a vertex minimizing the sum of distances to a randomly sampled vertex subset of size $\bigo(\varepsilon^{-2}\log n)$.
\end{abstract}

\newpage 
\setcounter{page}{1}

\section{Introduction} \label{sec:intro}

Our research problems originate in the classical ``twenty questions game'' proposed by R\'{e}nyi~\cite{Renyi61} and Ulam~\cite{Ulam76}.
The classical problem of binary search with erroneous comparisons received a considerable attention and optimal query complexity algorithms are known, see e.g. \cite{Ben-OrH08,BorgstromK93,DhagatGW92,FeigeRPU94,RivestMKWS80} for asymptotically best results.
The binary search in linearly ordered data can be re-casted as a search on a path, where each query selects a vertex $q$ and reply gives whether the target element is $q$, or is to the left or to the right of $q$.
This leads to the graph search problem introduced first for trees by Onak and Parys in \cite{OnakP06} and then recently for general graphs by Emamjomeh-Zadeh et~al. in \cite{Emamjomeh-Zadeh:2015aa}.
We recall a following formal statement.

\setlist[description]{leftmargin=\parindent,labelindent=\parindent}
\begin{description}
 \item[Problem formulation.] Consider an arbitrary simple graph $G$ whose one vertex $\target$ is marked as the \emph{target}.
The target is unknown to the query algorithm.
Each query points to a vertex $q$, and a \emph{correct} reply does the following: if $\target=q$, then the reply returns $q$, and if $\target\neq q$, then the reply returns a neighbor of $q$ that belongs to a shortest path from $q$ to $\target$, breaking ties arbitrarily.
We further assume that some replies can be incorrect: each query receives an erroneous reply (independently) with some fixed probability $0\leq p<\frac{1}{2}$ (the value of the \emph{noise parameter} $p$ is known to the algorithm). 
The goal is to design an algorithm, also called a \emph{strategy} performing as few queries as possible.
\end{description}

Typically in the applications of the adaptive query problems the main concern is the number of queries to be performed, i.e., their \emph{query complexity}.
This is due to the fact that the queries usually model a time consuming and complex event like making a software check to verify whether it contains a malfunctioning piece of code, c.f. Ben-Asher et al. \cite{Ben-AsherFN99}, or asking users for some sort of feedback c.f. Emamjomeh-Zadeh and Kempe \cite{Emamjomeh-ZadehK17}.
However, as a second measure the computational complexity comes into play and it is of practical interest to resolve the question of having an adaptive query algorithm that keeps an optimal query complexity and optimizes the computational cost as a second criterion.
This may be especially useful in cases when queries are fast, like communication events over a noisy channel.

The asymptotics of the query complexity is quite well understood to be roughly $\frac{\log n}{1-H(p)}=\bigo(\varepsilon^{-2}\log n)$ (c.f. \cite{DereniowskiTUW19,Emamjomeh-Zadeh:2015aa}), where $n$ is the order of the graph, $\varepsilon=\frac{1}{2}-p$, and $H(p)=-p\log_2 p-(1-p)\log_2(1-p)$ is the entropy.
Thus, it is of theoretical and practical interest to know what is the optimal complexity of computing each particular query.
This leads us to a general statement of the type of solution we seek.
\begin{description}
 \item[Research question.]
 How much the computational complexity of an adaptive graph query algorithm can be improved without worsening the query complexity?
\end{description}
In this work we make the following assumption: a \emph{distance oracle} is available to the algorithm and it gives the graph distance between any pair of vertices. 
This is dictated by the observation that the computation of multiple-pair shortest paths throughout the search would dominate the computational complexity.
On the other hand, we note that this is only used to resolve (multiple times) the following for a query: given a vertex $q$, its neighbor $v$ and an arbitrary vertex $u$, does $v$ lie on a shortest path from $q$ to $u$?
Thus, some weaker oracles can be assumed instead.
We further comment on this assumption in the next section.

\subsection{Motivation} \label{sec:motivation}

To sketch potential practical scenarios of using graph queries we mention a set of examples given in \cite{Emamjomeh-ZadehK17}.
These examples are anchored in the field of machine learning, and since they have the same flavor with respect as how graphs are used, we refer to one of them.
Consider a situation in which a \emph{system} wants to learn a clustering by asking queries.
Each query presents a potential clustering to a user and if this is not the target clustering, then as a response the user either points two clusters that should be merged or points one cluster that should be split (but does not say how to split it).
Thus, the goal is to construct a query algorithm to be used by the system.
It turns out that learning the clustering can be done by asking queries on a graph: each vertex $v$ corresponds to a clustering and a reply of the user for $v$ will be aligned with one of the edges incident to $v$.
In other words, the reply can be associated with an edge outgoing from $v$ that lies on a shortest path to the desired target clustering.
We emphasize some properties of this approach.
First, the fact that the reply indeed reveals the shortest path to the target is an important property of the underlying graph used by the algorithm and thus the graph needs to be carefully defined to satisfy it.
Second, the user is not aware of the fact that such a graph-theoretic approach is used, as only a series of proposed clustering is presented.
Third, this approach is resilient to errors on the user side: the graph query algorithms easily handle the facts that some replies can be incorrect (the user may make a mistake, or may not be willing to reveal the truth).
It has been shown \cite{Emamjomeh-ZadehK17} that in a similar way one can approach the problems of learning a classifier or learning a ranking.

From the standpoint of complexity we can approach such scenarios in two ways.
First, one can derive an algorithm that specifically targets a particular application.
More precisely, if one considers one of the above applications, then it may turn out that e.g. it is not necessary to construct the entire graph but instead reconstruct only what is necessary to perform each query.
The second way is the general approach taken in this work: to consider the underlying graph as an abstract data structure out of the context of how it is used in particular applications.
We note that examples like the ones mentioned above reveal that some applications may be burdened by the fact that the underlying graph is large, in which case the computational complexity, or local search procedures may be more crucial.

We finally comment on our assumption that a shortest path oracle is provided to the algorithm.
In the machine learning applications \cite{Emamjomeh-ZadehK17}, the graphs may be constructed in such a way that knowing which objects represent two vertices is sufficient to conclude the distance between them, i.e., a low-complexity distance oracle can be indeed implemented.
This can be seen as a special case of a general approach to achieve distance oracles in practice through the so called distance-labeling schemes (c.f. Gavoille et al. \cite{DBLP:journals/jal/GavoillePPR04} and for practical approaches, c.f. Abraham et al. and Kosowski and Viennot \cite{AbrahamDFGW16,KosowskiV17}).
We finally note that having the exact distances between vertices is crucial for this problem: if the distance oracle is allowed to provide even just a $1$-additive approximation of the exact distance, then each query algorithm needs to perform $\Omega(n)$ queries for some graphs c.f. Deligkas et~al. \cite{DeligkasMS19}. We note that the distance oracle access can be replaced with a multi-source distance computation (e.g. using BFS), at the cost of replacing some of the $\bigo(n)$ factors in the cost functions with $\bigo(m)$. Alternatively, a popular assumption borrowed from computational geometry is that we operate on a metric space with a metric (distance) function given.

\subsection{Our Results and Techniques} \label{sec:our-results}
For a query on a vertex $q$ with a reply $v$, we say that a vertex $u$ is \emph{consistent} with the reply if $q=v=u$, or $q\neq v$ but $v$ lies on a shortest path between $u$ and $q$; the set of all such consistent vertices $u$ is denoted by $N(q,v)$.
Our method is based on a multiplicative weight update (MWU): the algorithm keeps the weights $\vweight(v)$ for all vertices $v$, starting with a uniform assignment.
The weight is representing the likelihood that a vertex is the target, although we point out that formally this is not a probability distribution.
In MWU, the weight of each vertex that is not consistent with a reply is divided by an appropriately fixed constant $\Gamma$ that depends on $\varepsilon=\frac{1}{2}-p$.

To keep the query complexity low, it is required that the queried vertex $q$ fulfills a measure of `centrality' in a graph in the sense that a query to such a central vertex results in an adequate decrease in the total weight.
This is a graph-theoretic analogue of the `central' element comparison in the classical binary search.
Two functions that have been used in the literature \cite{DeligkasMS19,DereniowskiTUW19,Emamjomeh-ZadehK17} to formalize this are 
\[\Phi(v) = \sum_{u \in V} d(u,v) \cdot \vweight(u),\qquad\qquad\text{and}\qquad\qquad\Lambda(v) = \max_{u \in N(v)} \vweight(N(v,u)),\]
where $N(v)$ is the set of neighbors of $v$ in the graph, and $d(u,v)$ is the distance between $u$ and $v$.
For brevity, $\vweight(S)=\sum_{u\in S}\vweight(u)$ for any $S\subseteq V$, and $\vweight=\vweight(V)$.
\begin{definition}
A vertex $q = \arg \min_{v \in V} \Phi(v)$ is called a \emph{median}.
\end{definition}
We note a fundamental bisection property of a median:
\begin{lemma}[c.f. \cite{Emamjomeh-Zadeh:2015aa} section 2] \label{lem:folklore}
If $q$ is a median, then $\Lambda(q) \le \vweight(V)/2$.
\end{lemma}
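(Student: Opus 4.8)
The plan is to exploit the defining minimality of $\Phi$ at the median $q$ by comparing $\Phi(q)$ against $\Phi(v)$ for each neighbor $v\in N(q)$. Since $\Lambda(q)$ is the maximum of $\vweight(N(q,v))$ over all neighbors $v$, it suffices to fix an arbitrary $v\in N(q)$ and prove $\vweight(N(q,v))\le \vweight(V)/2$.

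First I would examine the per-vertex difference $d(u,q)-d(u,v)$. Because $v$ is adjacent to $q$ we have $d(q,v)=1$, so the triangle inequality forces $d(u,q)-d(u,v)\in\{-1,0,+1\}$ for every $u\in V$. The crucial observation is that this difference equals $+1$ exactly when $d(u,q)=d(u,v)+1$, which (as $d(v,q)=1$) is precisely the condition that $v$ lies on a shortest path from $u$ to $q$, i.e.\ $u\in N(q,v)$. Symmetrically, the difference equals $-1$ exactly when $u\in N(v,q)$, and the vertices with difference $0$ contribute nothing to the weighted sum.

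Summing against the weights then yields the clean identity
\[
\Phi(q)-\Phi(v)=\sum_{u\in V}\bigl(d(u,q)-d(u,v)\bigr)\vweight(u)=\vweight(N(q,v))-\vweight(N(v,q)).
\]
Since $q$ minimizes $\Phi$, the left-hand side is at most $0$, which gives $\vweight(N(q,v))\le\vweight(N(v,q))$.

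Finally I would use that $N(q,v)$ and $N(v,q)$ are disjoint, as a vertex cannot be strictly closer to both $v$ and $q$; hence these are disjoint subsets of $V$ and $\vweight(N(q,v))+\vweight(N(v,q))\le\vweight(V)$. Combining this with the inequality above yields $2\,\vweight(N(q,v))\le\vweight(V)$, so $\vweight(N(q,v))\le\vweight(V)/2$, and taking the maximum over $v\in N(q)$ completes the argument. The only delicate step is the bookkeeping in the second paragraph: one must verify that the consistency set $N(q,v)$, defined through shortest paths, coincides exactly with the ``$+1$-difference'' set, a correspondence that hinges on $v$ being a neighbor of $q$. Everything else reduces to a short triangle-inequality computation.
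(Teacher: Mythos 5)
Your proof is correct, and it is essentially the same argument as the one behind the paper's citation: the paper gives no proof of this lemma itself, deferring to \cite{Emamjomeh-Zadeh:2015aa}, where the statement is established precisely by this neighbor comparison --- moving from the median $q$ to a neighbor $v$ changes every distance by at most one, decreasing it exactly on $N(q,v)$, so minimality of $\Phi$ at $q$ forces $\vweight(N(q,v))\le\vweight(V)/2$. Your exact identity $\Phi(q)-\Phi(v)=\vweight(N(q,v))-\vweight(N(v,q))$ is a mild refinement of the inequality used there (which only needs that the vertices outside $N(q,v)$ gain at most $+1$), but the approach and all key steps coincide.
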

Such property is key for building efficient binary-search algorithms in graphs, see \cite{DereniowskiTUW19,Emamjomeh-Zadeh:2015aa}: e.g., for the noiseless case, repeatedly querying a median of $X$, where $X \subseteq V$ is the subset of vertices that still can be a target, results in a strategy guaranteeing at most $\log_2 n$ queries.

A disadvantage of using median is that it is computationally costly to find. 
Moreover, using its multiplicative approximation, that is, through a function $\Phi'$ such that $\Phi'(q) = (1 \pm \varepsilon')\Phi(q)$ for any constant $\varepsilon'>0$, blows up the strategy length exponentially \cite{DeligkasMS19} and thus this approach is not suitable.
On the other hand, approximating $\Lambda$-minimizer is feasible, as noted also by \cite{DeligkasMS19}. 

Hence, we work towards a method of efficient median approximation through $\Lambda$ minimization.
We believe that this algorithmic approach is of independent interest and can be used in different graph-theoretic problems.
Interestingly, it turns out that we do not even need a multiplicative approximation of a $\Lambda$-minimizer but we only need that $\Lambda(q)$ is at most roughly half of the total weight.
This is potentially usable in algorithms using generally understood graph bisection.
(For an example of using such balanced separators for somewhat related search with persistent errors see e.g. Boczkowski et~al. \cite{BoczkowskiKR18}.)
Formally, motivated by Lemma~\ref{lem:folklore}, we relax the notion of the median to the following.
\begin{definition} \label{def:Lambda}
We say that a vertex $q^*$ is \emph{$\delta$-close} to a median, for some $\delta>0$, when $\Lambda(q^*) \leq \left( \frac{1}{2}+\delta \right) \cdot \vweight.$
\end{definition}
 
To work-around the fact that $\Lambda$ is not efficient from the algorithmic standpoint, we introduce the following relaxation of $\Phi$:
\[\PhiS(q)=\sum_{v\in S}d(q,v),\]
where $S$ is a random sample of vertices with probability distribution proportional to $\vweight$.
We can now formulate our main contribution in terms of new algorithmic tools:
\begin{description}
 \item[Median approximation.]
 The relaxation of $\Phi$ to $\PhiS$ provides, with high probability, a sufficient approximation of the median vertex in a graph.
\end{description}
We formalize this statement in the following way.
Consider a sample size $s=\frac{8\ln n}{\delta^2}$, where $n$ is the number of vertices of the graph.
This allows us to say how to approximate the median efficiently through a local condition:
\begin{theorem} \label{lem:local-minimum}
Let $q$ be a vertex such that for each $v\in N(q)$ it holds $\PhiS(q) \leq \PhiS(v) + \delta s$.
Then, with high probability at least $1-n^{-3}$, the vertex $q$ is $\delta$-close to a median.
\end{theorem}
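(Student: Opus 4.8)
The plan is to reduce the probabilistic claim to a single deterministic inequality that controls $\Lambda$ by the \emph{local gaps} of $\Phi$, and then to establish that those gaps are small using Hoeffding concentration of $\PhiS$ around its scaled expectation $\tfrac{s}{\vweight}\Phi$.

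First I would prove the deterministic core. Fix a neighbor $v\in N(q)$ and partition $V$ by the sign of $d(u,q)-d(u,v)\in\{-1,0,1\}$: let $A=N(q,v)=\{u: d(u,v)=d(u,q)-1\}$, let $B=\{u: d(u,v)=d(u,q)+1\}$, and let $C$ be the rest. Directly from the definition of $\Phi$,
\[
\Phi(q)-\Phi(v)=\sum_{u\in V}\vweight(u)\bigl(d(u,q)-d(u,v)\bigr)=\vweight(A)-\vweight(B),
\]
while trivially $\vweight(A)+\vweight(B)\le\vweight$. Averaging these two relations yields the key bound
\[
\vweight(N(q,v))=\vweight(A)\le\frac{\vweight+\Phi(q)-\Phi(v)}{2}.
\]
Hence, to certify that $q$ is $\delta$-close it suffices to show $\Phi(q)-\Phi(v)\le 2\delta\,\vweight$ for every neighbor $v$; taking the maximum over $v\in N(q)$ then gives $\Lambda(q)\le(\tfrac12+\delta)\vweight$. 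This also reproduces Lemma~\ref{lem:folklore}, since at a global minimizer every gap is non-positive.

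The remaining task is to show that the observed local condition forces the true gaps to be small. For a fixed neighbor $v$, writing $S=\{x_1,\dots,x_s\}$ for the $s$ i.i.d.\ samples, the quantity $\PhiS(q)-\PhiS(v)=\sum_{i=1}^{s}\bigl(d(q,x_i)-d(v,x_i)\bigr)$ is a sum of independent variables $Z_i:=d(q,x_i)-d(v,x_i)\in\{-1,0,1\}$ with common mean $\mu_v:=\E[Z_i]=(\Phi(q)-\Phi(v))/\vweight$. I would argue the contrapositive: if a neighbor were \emph{bad}, meaning $\mu_v>2\delta$ (equivalently $\Phi(q)-\Phi(v)>2\delta\vweight$), then the local inequality $\sum_i Z_i\le\delta s$ would require a downward deviation of at least $\delta s$ below the mean. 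Since each $Z_i$ has range $2$, Hoeffding's inequality bounds this by $\exp\!\bigl(-2(\delta s)^2/(s\cdot 4)\bigr)=\exp(-\delta^2 s/2)$, which equals $n^{-4}$ for the prescribed $s=8\ln n/\delta^2$.

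Finally I would take a union bound over the at most $n$ neighbors of the fixed vertex $q$: with probability at least $1-n\cdot n^{-4}=1-n^{-3}$, no bad neighbor satisfies the local inequality. Since by hypothesis $q$ satisfies $\PhiS(q)\le\PhiS(v)+\delta s$ for all of its neighbors, on this event every neighbor has $\mu_v\le 2\delta$, so $\Phi(q)-\Phi(v)\le 2\delta\vweight$, and the deterministic core yields $\delta$-closeness. The conceptual crux is the bisection inequality of the second paragraph; the concentration argument is routine, and the only delicate points are bookkeeping: one must use the genuine width $2$ of $Z_i$ (not $1$) in Hoeffding and align the one-sided lower tail with the ``$\le\delta s$'' side of the hypothesis, since an off-by-a-factor there is the easiest way to spoil the exponent. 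Equally important is the \emph{scope} of the union bound --- the clean $n^{-3}$ arises precisely because $q$ is treated as fixed and the union ranges only over its $\le n$ neighbors.
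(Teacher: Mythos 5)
Your proof is correct, and it takes a genuinely different decomposition than the paper's, even though the underlying ingredients (the distance-difference partition and Hoeffding with the same exponent) coincide. The paper splits the argument the other way around: it first proves a purely deterministic lemma \emph{inside the sample} (Lemma~\ref{lem:potdrop}: the local condition $\PhiS(q)\le\PhiS(v)+\delta s$ forces $\LambdaS(q)\le\frac{s(1+\delta)}{2}$, via exactly your $A^{-}/A^{+}$ bookkeeping applied to $\sample$ rather than to $V$), and then uses Hoeffding plus a union bound over the neighbors of $q$ to transfer the \emph{sample} quantity $\LambdaS$ to the true $\Lambda$ (Lemma~\ref{lem:sampling-lambda}: $\frac{\Lambda(q)}{\vweight}\le\frac{\LambdaS(q)}{s}+\delta/2$ w.h.p., concentrating the indicator masses $|\sample\cap N(q,u)|/s$ around $\vweight(N(q,u))/\vweight$). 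You instead bypass $\LambdaS$ entirely: your concentration step transfers the observed $\PhiS$-gaps to the true $\Phi$-gaps (concentrating the three-valued differences $Z_i$ with the correct range $2$), and your bisection inequality $\vweight(N(q,v))\le\frac{\vweight+\Phi(q)-\Phi(v)}{2}$ lives at the level of the true weights. The two routes are mirror images with identical quantitative content ($\exp(-\delta^{2}s/2)=n^{-4}$ per neighbor, union bound over at most $n$ neighbors, and both treat $q$ as fixed so the scope of the union bound matches). Your version is somewhat more self-contained --- it needs only one probabilistic statement, and your deterministic core recovers Lemma~\ref{lem:folklore} as a free byproduct --- while the paper's version isolates a reusable sampling guarantee ($\LambdaS$ is a faithful proxy for $\Lambda$) and keeps the deterministic reasoning about quantities the algorithm can actually observe, which is the perspective its complexity analysis is built around.
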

As a consequence, we obtain:
\begin{corollary} \label{cor:median-equivalence}
Let $q^* = \arg \min_{v \in V} \PhiS(v)$. 
Then, the vertex $q^*$ is $\delta$-close to a median with high probability at least $1-n^{-3}$.
\end{corollary}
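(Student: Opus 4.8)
The plan is to read Corollary~\ref{cor:median-equivalence} off Theorem~\ref{lem:local-minimum}: I only need to check that the global minimizer $q^*$ satisfies the theorem's local hypothesis. Since $q^* = \arg\min_{v\in V}\PhiS(v)$, we have $\PhiS(q^*)\le\PhiS(v)$ for \emph{every} $v\in V$, and in particular $\PhiS(q^*)\le\PhiS(v)\le\PhiS(v)+\delta s$ for each neighbor $v\in N(q^*)$, because $\delta s\ge 0$. Thus $q^*$ is exactly a vertex of the form to which Theorem~\ref{lem:local-minimum} applies, and the $\delta$-closeness of $q^*$ follows.

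The one point that needs care --- and the main obstacle --- is that $q^*$ is \emph{data-dependent}: it is a function of the random sample $S$, whereas Theorem~\ref{lem:local-minimum} is phrased for a vertex $q$ fixed in advance, so one cannot naively substitute $q=q^*$ into a per-$q$ tail bound. I would resolve this by making the theorem's guarantee uniform in the choice of center. Its proof bounds, for an ordered adjacent pair $(q,v)$, the lower deviation of $\PhiS(q)-\PhiS(v)$ below its mean $\tfrac{s}{\vweight}\bigl(\Phi(q)-\Phi(v)\bigr)$ by applying Hoeffding's inequality to the $s$ independent differences $d(q,x)-d(v,x)$ over the sampled vertices $x\in S$, each of which lies in $\{-1,0,1\}$ for adjacent $q,v$; with $s=\tfrac{8\ln n}{\delta^2}$, a deviation of size $\delta s$ fails with probability at most $n^{-4}$. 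Intersecting these events over all ordered adjacent pairs produces a single event $\mathcal{E}$, measurable with respect to $S$, on which every vertex meeting the local condition is $\delta$-close; the data-dependent $q^*$ may then be substituted on $\mathcal{E}$ with no independence concern.

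A cleaner, self-contained route that sidesteps the substitution altogether exploits that $q^*$ is the \emph{exact} minimizer. Consider the event that every ordered adjacent pair $(q,v)$ with $\Phi(q)-\Phi(v)>2\delta\vweight$ satisfies $\PhiS(q)>\PhiS(v)$: for such a pair the mean gap $\tfrac{s}{\vweight}\bigl(\Phi(q)-\Phi(v)\bigr)$ exceeds $2\delta s$, so the observed gap stays positive unless a lower deviation of size $>2\delta s$ occurs, which by the same Hoeffding estimate has probability at most $\exp(-2\delta^2 s)=n^{-16}$, and a union over the at most $n^2$ such pairs leaves the total failure far below $n^{-3}$. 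Since $q^*$ minimizes $\PhiS$, it can never be the larger endpoint of such a pair, so no neighbor $v$ has $\Phi(q^*)-\Phi(v)>2\delta\vweight$. Finally, the deterministic bisection estimate underlying Lemma~\ref{lem:folklore}, namely $\vweight(N(q,v))\le\tfrac12\bigl(\vweight+\Phi(q)-\Phi(v)\bigr)$ for adjacent $q,v$ (obtained by noting $d(v,u)-d(q,u)=-1$ for $u\in N(q,v)$ and $d(v,u)-d(q,u)\le 1$ otherwise), yields $\vweight(N(q^*,v))\le(\tfrac12+\delta)\vweight$ for every neighbor $v$, i.e.\ $\Lambda(q^*)\le(\tfrac12+\delta)\vweight$, which is precisely $\delta$-closeness.
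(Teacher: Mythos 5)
Your proof is correct, but its probabilistic core takes a genuinely different route from the paper's. The paper derives Corollary~\ref{cor:median-equivalence} in one line: the global $\PhiS$-minimizer trivially satisfies the hypothesis of Theorem~\ref{lem:local-minimum} (your first paragraph), and that theorem is in turn proved not by controlling $\PhiS$-differences, as your second paragraph supposes, but by combining the deterministic Lemma~\ref{lem:potdrop} (local $\PhiS$-minimality forces $\LambdaS(q)\le\frac{s(1+\delta)}{2}$) with Lemma~\ref{lem:sampling-lambda}, a Hoeffding bound relating $\LambdaS/s$ to $\Lambda/\vweight$. Your self-contained argument in the third paragraph avoids $\LambdaS$ altogether: a single event, uniform over all ordered adjacent pairs, guarantees that $\PhiS$ respects every $\Phi$-gap exceeding $2\delta\vweight$; exact minimality of $q^*$ then yields $\Phi(q^*)-\Phi(v)\le 2\delta\vweight$ for all $v\in N(q^*)$, and the deterministic bisection inequality $\vweight(N(q,v))\le\frac{1}{2}\left(\vweight+\Phi(q)-\Phi(v)\right)$ (the argument behind Lemma~\ref{lem:folklore}) finishes the job. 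This buys two things. First, it correctly handles the subtlety you flag: $q^*$ depends on $\sample$, while Theorem~\ref{lem:local-minimum} is a per-vertex probability statement, so the paper's one-line deduction is, strictly speaking, incomplete without a uniformization step exactly like the one you construct. Second, your event is cheaper: you only need to exclude a sign reversal across a mean gap of $2\delta s$, giving per-pair failure $\exp(-2\delta^2 s)=n^{-16}$ and overall failure at most $n^{-14}$ after the union bound over at most $n^2$ pairs, whereas making the paper's Lemma~\ref{lem:sampling-lambda} uniform over all vertices with $s=\frac{8\ln n}{\delta^2}$ would only give $1-n^{-2}$, slightly short of the stated $1-n^{-3}$. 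All your computations (the range $[-1,1]$ of $d(q,x)-d(v,x)$ for adjacent $q,v$, the mean $\E[\PhiS(q)-\PhiS(v)]=\frac{s}{\vweight}\left(\Phi(q)-\Phi(v)\right)$, and the Hoeffding exponents) check out.
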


Returning to our search problem, these are enough to both find the right query vertex in each step, keep the strategy length low, and have a centrality measure that is efficient in terms of computational complexity.
This leads us to the following theorem that is based on MWU with some appropriately fixed scaling factor $\Gamma$.
\begin{restatable}{theorem}{complexityA}
\label{thm:complexityA}
Let $p = \frac{1}{2} - \varepsilon$ be the noise parameter for some $0 < \varepsilon \le \frac12$.
There exists an adaptive query algorithm that after asking $\T= \bigo(\frac{\log n}{\varepsilon^2})$ queries returns the target correctly with high probability.
The computational complexity of the algorithm is $\bigo(\frac{n \log n}{\varepsilon})$ per query.
\end{restatable}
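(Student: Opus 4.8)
The plan is to instantiate the multiplicative weight update (MWU) scheme sketched before the statement: maintain the weights $\vweight$ (initially uniform), and in each of $\T=\bigo(\varepsilon^{-2}\log n)$ rounds draw a sample $\sample$ of size $s=\Theta(\varepsilon^{-2}\log n)$ with probability proportional to $\vweight$, query the vertex $q=\arg\min_{v}\PhiS(v)$, and divide by a fixed factor $\Gamma$ the weight of every vertex inconsistent with the reply; at the end we return $\arg\max_{v}\vweight(v)$. I would fix $\delta=\Theta(\varepsilon)$ in the regime $\delta<\varepsilon$ and $\Gamma=1+\Theta(\varepsilon)$. Corollary~\ref{cor:median-equivalence} then guarantees that in each round $q$ is $\delta$-close to a median with probability at least $1-n^{-3}$, so by a union bound all $\T$ queried vertices are $\delta$-close with high probability; conditioning on this event, Definition~\ref{def:Lambda} and Lemma~\ref{lem:folklore} give that the set consistent with any reply carries weight at most $(\tfrac12+\delta)\vweight$.

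For the query complexity I would track the potential $\Psi=\ln\!\big(\vweight(V)/\vweight(\target)\big)$, which starts at $\ln n$; reaching $\Psi<\ln 2$ suffices, since then $\vweight(\target)>\vweight(V)/2$ and the returned vertex is correct. Because the inconsistent set carries weight at least $(\tfrac12-\delta)\vweight$, every round shrinks the total weight deterministically, so $\ln\vweight(V)$ drops by $\alpha:=-\ln\!\big(1-(\tfrac12-\delta)(1-\Gamma^{-1})\big)$, whereas $\vweight(\target)$ is rescaled only on an erroneous reply, an event of probability at most $p$, giving $\E[\Delta\ln\vweight(\target)]\ge -p\ln\Gamma$. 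Hence $\Psi$ is a supermartingale with per-round drift $\E[\Delta\Psi]\le -(\alpha-p\ln\Gamma)$; a short optimization over $\Gamma$ (this is what forces $\delta<\varepsilon$) makes the drift $-\Theta(\varepsilon^2)$ while keeping the increments bounded by $\bigo(\ln\Gamma)=\bigo(\varepsilon)$. Azuma's inequality for bounded-increment supermartingales then shows that after $\T=\bigo(\varepsilon^{-2}\log n)$ rounds $\Psi<\ln 2$ with high probability.

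The computational heart of the theorem, and the step I expect to be the main obstacle, is realizing each round in $\bigo(\varepsilon^{-1}n\log n)$ rather than the naive $\bigo(ns)=\bigo(\varepsilon^{-2}n\log n)$ one pays to recompute $\PhiS(v)=\sum_{w\in\sample}d(v,w)$ from scratch for all $v$. The idea is to maintain both the array $(\PhiS(v))_{v\in V}$ and the sample $\sample$ incrementally. Since a query divides only the inconsistent weights by $\Gamma=1+\Theta(\varepsilon)$, the total weight shrinks by a $1-\Theta(\varepsilon)$ factor and the consecutive distributions differ in total variation by only $\Theta(\varepsilon)$; using a dynamic weighted-sampling structure coupled across rounds, $\sample$ can be refreshed so that only $\bigo(s\varepsilon)=\bigo(\varepsilon^{-1}\log n)$ of its points change in expectation, while its marginal stays exactly proportional to the updated $\vweight$ (so the hypotheses of Corollary~\ref{cor:median-equivalence} remain intact). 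Each changed sample point $w\to w'$ is folded into the array by the constant-time update $\PhiS(v)\gets\PhiS(v)-d(v,w)+d(v,w')$, costing $\bigo(n)$ distance-oracle calls per changed point and $\bigo(\varepsilon^{-1}n\log n)$ per round in total, after which recomputing $\arg\min_{v}\PhiS(v)$ by a linear scan adds only $\bigo(n)$.

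Two points need the most care. First, proving the total-variation bound $\Theta(\varepsilon)$ between consecutive distributions and exhibiting a coupling (and supporting data structure) whose expected number of sample changes matches it: this is precisely what converts the $\varepsilon^{-2}$ of the static median-approximation tool into the $\varepsilon^{-1}$ of the theorem. Second, checking that reusing a coupled, hence non-independent, sample across rounds does not spoil the per-round guarantee; this holds because the union bound over the $\T$ rounds only needs each round's sample to have the correct marginal $\propto\vweight$, not independence across rounds. Combining the two high-probability events — all queried vertices $\delta$-close, and $\Psi<\ln 2$ after $\T$ rounds — yields the claimed correctness, query complexity $\T=\bigo(\varepsilon^{-2}\log n)$, and per-query cost $\bigo(\varepsilon^{-1}n\log n)$.
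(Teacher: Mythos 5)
Your algorithm coincides with the paper's (MWU with $\Gamma=1+\Theta(\varepsilon)$, query a $\PhiS$-minimizer, incremental sample maintenance), and your computational analysis is essentially the paper's: the per-point coupling you describe is exactly Algorithm~\textsc{\ref{alg:resampling}}, the $\bigo(\varepsilon s)$ bound on changed sample points per round is Lemma~\ref{lem:resampling}, and folding each changed point into the array of $\PhiS$-values at $\bigo(n)$ cost is the update formula used in the paper's proof; also note that since $\vweight(v)=\frac1n\Gamma^{-\liecount{v}}$, your output rule $\arg\max_v\vweight(v)$ is literally the same as the paper's $\arg\min_v\liecount{v}$. The gap is in the correctness/query-complexity half of your argument.

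Your key claim --- that, conditioned on $q$ being $\delta$-close to a median, ``the set consistent with \emph{any} reply carries weight at most $(\tfrac12+\delta)\vweight$'' --- is false for the yes-answer $v=q$. The potential $\Lambda(q)=\max_{u\in N(q)}\vweight(N(q,u))$ quantifies only over neighbors, so $\delta$-closeness constrains no-answers only; the consistent set of a yes-answer is $\{q\}$, whose weight can be arbitrarily close to $\vweight$. Consequently, ``every round shrinks the total weight deterministically'' fails precisely when the queried vertex is \emph{heavy}, i.e.\ $\vweight(q)>(\tfrac12+\delta)\vweight$ --- and by Lemma~\ref{lem:almost-median-heavy} a heavy vertex is the \emph{unique} $\delta$-close vertex, hence exactly the vertex your algorithm must query. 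This is not a rare event to be union-bounded away: it is the intended endgame, in which the target has accumulated most of the weight, receives a yes-answer with probability $1-p$, and the total weight then shrinks by a factor arbitrarily close to $1$. So the uniform drift bound $\E[\Delta\Psi]\le-(\alpha-p\ln\Gamma)=-\Theta(\varepsilon^2)$ is invalid, and Azuma over all $\T$ rounds cannot be applied. One can check that $\Psi$ remains a supermartingale in the heavy regime, but its drift degrades in proportion to $1-\vweight(\target)/\vweight$, i.e.\ vanishes as $\Psi\to0$; worse, your terminal criterion $\Psi_{\T}<\ln2$ is itself unreliable, since near the heaviness threshold a single erroneous reply (probability $p\approx\tfrac12$) can raise $\Psi$ by $\Theta(\varepsilon)$, which exceeds the $\Theta(\delta)$ margin separating the heavy region from $\ln 2$, so the process re-crosses $\ln2$ with constant probability per visit and there is no reason it sits below $\ln 2$ at the fixed final time. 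This heavy-vertex phenomenon is exactly what the paper's proof spends its effort on: the per-step shrinkage~\eqref{eq:potdroperror} is asserted only when \emph{no} vertex is heavy (which then also covers yes-answers), maximal heavy segments are handled by an amortized pairing of yes- and no-answers (Lemma~\ref{lem:heavy}), the noise is first reduced via Hoeffding to an adversarial budget of $r\T$ errors (Lemma~\ref{lem:noise-success-probability}) so that correctness of the min-lie-count output can be argued deterministically (Lemma~\ref{lem:linearly-bounded}), and the case where the returned vertex is still heavy at termination is resolved by appending virtual queries. Your proposal needs a replacement for this entire mechanism; as written, the supermartingale-plus-Azuma argument does not yield the theorem.
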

The algorithm behind the theorem iterates over the entire vertex set to find a $\PhiS$-minimizer.
We can refine this algorithm for graphs of low maximum degree $\Delta$ and diameter $D$.
For that we use a local search whose direct application requires `visiting' $D\Delta$ vertices to get to a $\PhiS$-minimizer.
However, we introduce two ideas to speed it up.
The first one is adding another approximation layer on top of $\PhiS$: it is not necessary to find the exact $\PhiS$-minimizer but its approximation, which we do as follows.
Whenever the local search moves from one vertex $u$ to its neighbor $v$ and the improvement from $\PhiS(u)$ to $\PhiS(v)$ is sufficiently small, then $v$ will do for the next query.
The second one is to start the local search from the vertex queried in the previous step.
These two ideas combined lead to the second main result.
\begin{restatable}{theorem}{complexityB}
\label{thm:complexityB}
Let $p = \frac{1}{2} - \varepsilon$ for some $0 < \varepsilon \le \frac12$.
There exists an adaptive query algorithm that after asking $\T= \bigo(\frac{\log n}{\varepsilon^2})$ queries returns the target correctly with high probability.
The average computational complexity per query is $\bigo(n+D \Delta \frac{\log n}{\varepsilon^2})$ for graphs with diameter $D$ and maximum degree $\Delta$.
\end{restatable}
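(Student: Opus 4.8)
The plan is to keep the multiplicative-weight-update skeleton of Theorem~\ref{thm:complexityA} and change only how the query vertex is produced: instead of scanning all vertices for a $\PhiS$-minimizer, the algorithm runs a greedy local search that, starting from some vertex, repeatedly moves to a neighbour as long as this improves $\PhiS$ by more than $\delta s$, and stops at a vertex $q$ where no neighbour improves by more than $\delta s$. By Theorem~\ref{lem:local-minimum} such a $q$ is $\delta$-close to a median with probability at least $1-n^{-3}$, which is exactly the bisection guarantee (in the spirit of Lemma~\ref{lem:folklore}) that the query-complexity analysis of Theorem~\ref{thm:complexityA} consumes. Hence, taking a union bound over the $\T=\bigo(\varepsilon^{-2}\log n)$ queries, correctness and the query bound carry over verbatim, and the whole novelty is in the computational cost.

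For the running time I would split each query into a sampling phase and a search phase. Rebuilding the cumulative-weight table (up to $n$ weights change after a reply) and drawing the size-$s$ sample costs $\bigo(n)$, which is the additive $\bigo(n)$ term. In the search phase one evaluation of $\PhiS$ costs $\bigo(s)$ through the distance oracle, and each step inspects at most $\Delta$ neighbours, so a step costs $\bigo(\Delta s)$; since $s=\bigo(\varepsilon^{-2}\log n)$, the theorem reduces to showing that, on average, the search performs only $\bigo(D)$ steps per query. Because each non-final step decreases $\PhiS$ by more than $\delta s$, while $\PhiS(q_t)$ is at least the global minimum of $\PhiS$, the number of steps in query $t$ is at most
\[
\frac{\PhiS(q_{t-1})-\PhiS(q_t)}{\delta s}\ \le\ \frac{\Phi_t(q_{t-1})-\Phi_t(m_t)}{\delta\,\vweight_t}+\bigo(1),
\]
where $q_{t-1}$ is the start of the search, $m_t$ is a median of the current weights $\vweight_t$, and the inequality uses the uniform concentration $\PhiS(v)=\tfrac{s}{\vweight_t}\Phi_t(v)\pm\delta s$ that already underlies Theorem~\ref{lem:local-minimum}. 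Thus it suffices to bound the \emph{value gap} $\Phi_t(q_{t-1})-\Phi_t(m_t)$ by $\bigo(\delta\,\vweight_t D)$.

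Here the two design choices enter. First, starting the search at the previous query vertex $q_{t-1}$ means $q_{t-1}$ is itself an approximate local minimiser for $\vweight_{t-1}$; second, the MWU update is \emph{gentle}: its factor is $\Gamma=1+\Theta(\varepsilon)$, and since every weight is either kept or divided by $\Gamma$ one has $\Phi_t(u)\le\Phi_{t-1}(u)\le\Gamma\,\Phi_t(u)$ for all $u$. Writing $E'_{t-1}=\Phi_{t-1}(q_{t-1})-\Phi_{t-1}(m_{t-1})$ for the value gap of $q_{t-1}$ under the old weights, I would chain
\[
\Phi_t(q_{t-1})\le\Phi_{t-1}(q_{t-1})=\Phi_{t-1}(m_{t-1})+E'_{t-1}\le\Phi_{t-1}(m_t)+E'_{t-1}\le\Gamma\,\Phi_t(m_t)+E'_{t-1},
\]
so that $\Phi_t(q_{t-1})-\Phi_t(m_t)\le(\Gamma-1)\vweight_t D+E'_{t-1}$. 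With $\Gamma-1=\bigo(\delta)$ the first term is $\bigo(\delta\,\vweight_t D)$, and everything comes down to bounding $E'_{t-1}$ — the value gap of an approximate local minimiser — by $\bigo(\delta\,\vweight_t D)$ as well; plugging this back yields $\bigo(D)$ steps for query $t$.

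The main obstacle is precisely this last point: proving that an approximate local minimiser of $\Phi$ is near-optimal in value, i.e. $\Phi(q)-\Phi(m)=\bigo(\delta\,\vweight\cdot d(q,m))$ for a median $m$. I would establish it through a geodesic-convexity property of the weighted sum-of-distances: along a shortest path from $q$ to $m$ the per-step decrease of $\Phi$ equals the weight that moves closer minus the weight that moves farther, and this decrease is non-increasing as one approaches the median; hence the approximate-local-minimum condition at $q$, which caps the first step's decrease at $\bigo(\delta\,\vweight)$, caps the decrease of every step, and summing over the at most $d(q,m)\le D$ steps gives the claimed gap. The remaining work is bookkeeping: fixing $\delta=\Theta(\varepsilon)$ so that $s=8\ln n/\delta^2=\Theta(\varepsilon^{-2}\log n)$ keeps the query complexity at $\bigo(\varepsilon^{-2}\log n)$ while forcing $(\Gamma-1)/\delta=\bigo(1)$, and absorbing the rare event (probability $\bigo(n^{-3})$ per query) that the concentration fails, where a step count can only be bounded by the trivial $\bigo(D/\delta)$ whose contribution to the \emph{average} cost is negligible after summing over the $\T$ queries.
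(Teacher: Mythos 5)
Your reduction of correctness and query complexity to Theorem~\ref{lem:local-minimum} plus a union bound, and your per-iteration cost accounting ($\bigo(\Delta s)$ per local-search step, $\bigo(n)$ per query for weight updates and resampling), coincide with the paper. The gap is in the step-count analysis, and it is fatal to your route. Your plan rests on the claim that an (approximate) local minimizer $q$ of $\Phi$ satisfies $\Phi(q)-\Phi(m)=\bigo(\delta\,\vweight\cdot d(q,m))$ for a median $m$, proved via ``geodesic convexity'' (the per-step decrease of $\Phi$ along a geodesic towards $m$ is non-increasing). That convexity holds on trees but fails in general graphs, and so does the conclusion. Concretely, let $G$ be the cycle $z,p_0,p_1,\dots,p_L,z$ (length $L+2$, $L$ even) with weight $\vweight/2$ on each of $p_0$ and $p_L$ and zero elsewhere. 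Then $\Phi(p_j)=\vweight(j+1)$ for $1\le j\le L/2-1$ and $\Phi(p_{L/2-1})=\Phi(p_{L/2})=\Phi(p_{L/2+1})=\vweight L/2$, so the arc midpoint $p_{L/2}$ is an \emph{exact} (weak) local minimizer of $\Phi$; yet $\Phi(z)=\vweight$, so its value gap is $\Theta(\vweight D)$, not $\bigo(\delta\vweight D)$. Along the geodesic $p_{L/2},p_{L/2-1},\dots,p_0$ the successive decreases are $0,\vweight,\vweight,\dots,\vweight$ --- increasing, not non-increasing. Note also that $p_{L/2}$ is $0$-close to a median in the sense of Definition~\ref{def:Lambda} (each of its two branches carries exactly $\vweight/2$): this is precisely why the paper's closeness notion is deliberately much weaker than value-approximation of $\Phi$ (which, per \cite{DeligkasMS19}, cannot be used anyway), and why the paper's proof never compares $\Phi$-values against the median. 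A secondary flaw: the uniform concentration $\PhiS(v)=\frac{s}{\vweight}\Phi(v)\pm\delta s$ you invoke is \emph{not} what underlies Theorem~\ref{lem:local-minimum} and is false at sample size $s=8\ln n/\delta^2$; the summands $d(u,v)$ lie in $[0,D]$, so Hoeffding only gives additive error $\Theta(D\sqrt{s\log n})=\Theta(\delta s D)$, and error $\pm\delta s$ would require $s=\Omega(D^2\log n/\delta^2)$. The paper's Theorem~\ref{lem:local-minimum} is routed through $\LambdaS$ and Lemma~\ref{lem:sampling-lambda} (indicator sums with range $[0,1]$) exactly to avoid this.

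The correct argument, and the one the paper gives, is amortized rather than per-query --- which is also why the theorem promises only \emph{average} complexity per query. By Observation~\ref{obs:iterations} the search at step $t+1$, started at the previous output $v_t$, makes at most $1+\bigl(\PhiS_{t+1}(v_t)-\PhiS_{t+1}(v_{t+1})\bigr)/(\delta s)$ iterations. Summing over $t$, the decreases telescope up to the drift terms $\PhiS_{t+1}(v_t)-\PhiS_t(v_t)$, and each such term is at most $2\varepsilon sD$ with high probability because Lemma~\ref{lem:resampling} guarantees that only $\bigo(\varepsilon s)$ sample points are replaced per step and each replacement changes $\PhiS(v_t)$ by at most $D$. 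This bounds the total iteration count by $\tau+\bigl(sD+2\tau\varepsilon sD\bigr)/(\delta s)=\bigo(D\tau)$, i.e., $\bigo(D)$ iterations per query on average, with no reference to the median's $\Phi$-value at all. Your per-query target is both stronger than what is claimed and unobtainable by your argument: as the cycle example shows, the starting vertex of a single search can sit at a local minimizer whose $\PhiS$-value is $\Theta(sD)$ above the $\PhiS$-minimum, so an individual search may be long; only the telescoped total is controlled.
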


\subsection{Related Work} \label{sec:related-work}
Median computation is one of the fundamental ways of finding central vertices of the graph, with huge impact on practical research \cite{Bavelas_1950,Beauchamp_1965,freeman1978centrality,hakimi1964optimum,Sabidussi_1966,tansel1983state}. A significant amount of research has been devoted to efficient algorithms for finding medians of networks \cite{Ostresh_1978,DBLP:conf/dis/TabataNK12,DBLP:journals/ieicet/TabataNK17} or approximating the notion \cite{DBLP:journals/siamjo/CantoneCFP05,DBLP:journals/tcs/Chang12}.  We note the seminal work of Indyk~\cite{DBLP:conf/stoc/Indyk99} which includes $1+\varepsilon$ approximation to $1$-median in time $\bigo(n/\varepsilon^5)$ in metric spaces -- we note that the form of approximation there differs from ours, although the very-high level technique of using random sampling is common. Chechik et~al. in \cite{DBLP:conf/approx/ChechikCK15} use (non-uniform) random sampling to answer queries on sum of distances to the queried vertices in graphs.

We also refer the reader to some recent work on the median computation in median graphs, see Beneteau~et~al. \cite{BeneteauCCV19} and references therein.
More related centrality measures of a graph are discussed in \cite{AbboudGW15,DBLP:conf/soda/AbboudWW16,DBLP:conf/soda/Cabello17} in the context of fine-grained complexity, showing e.g. that efficient computation of a median vertex (in edge-weighted graphs) is equivalent under subcubic reductions to computation of All-Pairs Shortest Paths.

Substantial amount of research has been done on searching in sorted data (i.e., paths), which included investigations for fixed number of errors \cite{Aigner96,RivestMKWS80}, optimal strategies for arbitrary number of errors and various error models, including linearly bounded \cite{DhagatGW92}, prefix-bounded \cite{BorgstromK93} and noisy/probabilistic \cite{Ben-OrH08,KarpK07}.
Also, a lot of research has been done on how different types of queries influence the search process --- see \cite{DaganFGM17} for a recent work and references therein.
The mostly studied comparison queries for paths have been extended to graphs in two ways.
First one is a generalization to partial orders \cite{Ben-AsherFN99,LamY01}, although this does not further generalize well for arbitrary graphs \cite{Dereniowski08}.
It is worth noting that a lot of work has been devoted to the computational complexity of finding error-less strategies \cite{DereniowskiKUZ17,LamY01,MozesOW08}.
The second extension is by using the vertex queries studied in this work, for which much less is known in terms of complexity.
It is worth to mention that the problem becomes equivalent to the vertex ranking problem for trees \cite{Schaffer89}, but not for general graphs (see also \cite{OnakP06}).

Similarly as in the case of the classical binary search, the graph structure guarantees that there always exists a vertex that adequately partitions the search space in the absence of errors \cite{Emamjomeh-Zadeh:2015aa}.
The problem becomes much more challenging as this is no longer the case when errors are present.
A centrality measure that works well for finding the right vertex to be queried is a median used in \cite{DereniowskiTUW19,Emamjomeh-Zadeh:2015aa}.
However, as shown in \cite{DeligkasMS19}, the median is sensitive to approximations in the following way.
When the algorithm decides to query a $(1+\varepsilon')$-approximation $v$ of the median (minimizer of $\Phi'$ which is $1+\varepsilon'$ approximation of $\Phi$), then some graphs require $\bigo(\sqrt{n})$ queries, 
where the approximation is understood as $\Phi(v)\leq(1+\varepsilon')\min_{u\in V}\Phi(u)$.
This results holds for the error-less case.
Furthermore, the authors introduce in~\cite{DeligkasMS19} the potential $\Lambda$ (denoted by $\varGamma$ therein) and prove, also for the error-less case, that it guarantees $\frac{\log_2 n}{1-\log_2(1+\varepsilon)} \approx (1+\varepsilon) \log_2n$ queries, when in each step a $(1+\varepsilon)$-approximation of the $\Lambda$-minimizer is queried.
However, this issue has been considered from a theoretical perspective and no optimization considerations have been made.
In particular, it was left open as to how to reduce the query complexity at an expense of working with such approximations.
This, and the consideration of the noise are two our main improvements with respect to \cite{DeligkasMS19}.
We also stress out that our definition of $\delta$-closeness to a median differs from $(1+\varepsilon)$-approximations in the sense that our definition is much less strict: a vertex $q^*$ that is $\delta$-close to a median may have the property that $\Lambda(q^*)$ significantly deviates from $\min_{u\in V}\Lambda(u)$.

Some complexity considerations have been touched in \cite{Emamjomeh-ZadehK17}, from the perspective of targeting specific machine learning applications, where already the above-mentioned $\Lambda$-minimizer has been used.
To make the statements form that work comparable to our results, we have two distinguish two input size measures that apply.
In \cite{Emamjomeh-ZadehK17}, for a particular application an input consists of a specific machine learning instance, and denote its size by $\tilde{n}$.
In order to find a solution for this instance, a graph $G$ of size $n$ is constructed and an adaptive query algorithm is being run on this graph.
It is assumed that $\log_2n$ is polynomial in $\tilde{n}$.
The diameter $D$ and maximum degree $\Delta$ of $G$ are both assumed in \cite{Emamjomeh-ZadehK17} to be polylogarithmic in $\tilde{n}$.
A local search is used to find a vertex that approximates the $\Lambda$-minimizer.
For that, in each step a sampling is used for the approximation purposes: for each vertex $v$ along the local search, all its neighbors $u$ are tested for finding an approximation $\Lambda$, giving the complexity of $\bigo(D\Delta)=\bigo(m)$, where $m$ is the number of edges of $G$.
It is concluded that the overall complexity of performing a single query is $\bigo(D\Delta \textrm{poly}(\log n,\frac{1}{\varepsilon}))=\bigo(m \cdot \textrm{poly}(\log n,\frac{1}{\varepsilon}))$.

\subsection{Outline}
We proceed in the paper as follows.
Section~\ref{sec:linearly-bounded} provides a `template' strategy in which we simply query a vertex that is $\delta$-close to a median.
The strategy length is there fixed carefully to meet the tail bounds on the error probability.
Then, in Section~\ref{sec:guarantee}, we prove that our sample size is enough to ensure high success probability.
Section~\ref{sec:sampling-complexity} observes that the overall complexity of the algorithm can be reduced by avoiding recasting the entire sample in each step: it is enough to replace only a small fraction of the current sample when going from one step of the strategy to the next.
We then combine these tools to prove our main theorems in Section~\ref{sec:proof}, where for Theorem~\ref{thm:complexityB} we additionally make several observations on speeding-up the classical local search in a graph.

\section{The Generic Strategy} \label{sec:linearly-bounded}
As an intermediate convenient step, we recall the following adversarial error model: given a constant $r$, if the strategy length is $\T$, then it is guaranteed that at most $r\cdot\T$ errors occurred throughout the search (their distribution may be arbitrary). 
We set our parameters as follows: let $\eta = \varepsilon/2$, $r = \frac{1}{2} - \eta$, and assume without loss of generality that $\eta < 1/8$.
Let $\delta = \eta/4$.
With these parameters, we provide Algorithm \textsc{\ref{alg:linearly-bounded-search}} that runs the multiplicative weight update with $\Gamma=\frac{1}{1-4\eta}$ for $\T=\frac{10\log_2n}{\eta^2}$ steps.
Then we prove (cf. Lemma~\ref{lem:linearly-bounded}) that this strategy length is sufficient for correct target detection in this error model.
We write $\vweight_t$ to denote the vertex weight in a step $t$. (So, $\vweight_0$ is the initial uniform weight assignment.)

\begin{figure}[htb]
\begin{center}
\begin{minipage}{.8\linewidth}
\begin{algorithm}[H]
\SetAlgoRefName{LB-Search}
	\caption{Always query a $\delta$-close vertex to a median.}
	\label{alg:linearly-bounded-search}
	$\vweight(v) \gets \frac{1}{n}$ and $\liecount{v}\gets 0$ for each $v\in V$\;
	\For{$\T=10 \frac{\log_2 n}{\eta^2}$ steps}
	{
		Let $q$ be any vertex that is $\delta$-close to a median\label{ln:almost-median}\;
		Query the vertex $q$\;
		\For{each vertex $u$ not compatible with the answer}
		{
			$\vweight(u) \gets \vweight(u) / \Gamma$, where $\Gamma=\frac{1}{1-4\eta}$\;			
			$\liecount{u} \gets \liecount{u}+1$
		}
	}
	\Return the vertex $v$ with the smallest $\liecount{v}$
\end{algorithm}
\end{minipage}
\end{center}
\end{figure}

\begin{lemma} \label{lem:linearly-bounded}
If during the execution of Algorithm \textsc{\ref{alg:linearly-bounded-search}} over total $\T$ queries there were at most $r\cdot \T$ errors, then the algorithm outputs the target.
\end{lemma}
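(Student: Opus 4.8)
The plan is to track two quantities through the run of Algorithm \textsc{\ref{alg:linearly-bounded-search}}: the total weight $\vweight_t(V)$ and the per-vertex counters $\liecount{v}$, exploiting that a vertex's weight is completely determined by its counter. A vertex loses a factor $\Gamma$ exactly when its counter is incremented, so after the run $\vweight_\T(v)=\frac1n\Gamma^{-\liecount{v}}$ for every $v$. Since $\Gamma>1$, the vertex of smallest $\liecount{v}$ returned by the algorithm is precisely the vertex of largest final weight; hence it suffices to prove that $\target$ is the unique maximizer of $\vweight_\T(\cdot)$, i.e.\ that $\liecount{\target}<\liecount{u}$ for every $u\neq\target$.

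First I would pin down $\liecount{\target}$. Using the definition of $N(q,v)$ one checks that $\target$ is compatible with a reply if and only if that reply is correct: a correct reply to $q$ keeps the returned $v$ on a shortest $q$--$\target$ path, so $\target\in N(q,v)$, whereas an incorrect reply places $\target$ outside $N(q,v)$. Consequently $\target$'s counter is incremented only on erroneous steps, so under the hypothesis $\liecount{\target}\le r\T$, and its final weight is at least $\frac1n\Gamma^{-r\T}$. Next I would bound the total weight from above, and the decisive point is that the bisection guarantee is oblivious to whether a reply is correct: for the queried vertex $q$, which is $\delta$-close to a median, every reply $v\in N(q)$ satisfies $\vweight_t(N(q,v))\le\Lambda(q)\le(\tfrac12+\delta)\vweight_t(V)$, so at least a $(\tfrac12-\delta)$-fraction of the weight is divided by $\Gamma$ in each step. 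This yields the one-step contraction
\[
\vweight_{t+1}(V)\ \le\ \Bigl[\bigl(\tfrac12+\delta\bigr)+\tfrac1\Gamma\bigl(\tfrac12-\delta\bigr)\Bigr]\vweight_t(V)\ =\ (1-\eta)^2\,\vweight_t(V),
\]
where the identity $(\tfrac12+\delta)+\tfrac1\Gamma(\tfrac12-\delta)=(1-\eta)^2$ follows from $\delta=\eta/4$ and $\tfrac1\Gamma=1-4\eta$. Telescoping from $\vweight_0(V)=1$ gives $\vweight_\T(V)\le(1-\eta)^{2\T}$.

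Finally I would close the argument by contradiction. If some $u\neq\target$ had $\liecount{u}\le\liecount{\target}\le r\T$, then $\vweight_\T(u)\ge\vweight_\T(\target)$, whence $\vweight_\T(V)\ge 2\vweight_\T(\target)\ge\tfrac2n\Gamma^{-r\T}=\tfrac2n(1-4\eta)^{r\T}$. Combined with the upper bound this forces $\tfrac2n(1-4\eta)^{r\T}\le(1-\eta)^{2\T}$, and the entire proof reduces to showing that this inequality is \emph{false} for the chosen parameters. The main obstacle is exactly this calibration: with $r=\tfrac12-\eta$ almost half the replies may be wrong, so the per-step contraction $(1-\eta)^2$ only barely beats the weight that errors preserve. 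Expanding the logarithms one finds $r\ln(1-4\eta)-2\ln(1-\eta)=\eta^2+\bigo(\eta^3)>0$, so the gap is only second order in $\eta$; this is precisely why $\T=\Theta(\eta^{-2}\log n)$ is required, and the constant $10$ in $\T=10\log_2 n/\eta^2$ leaves ample slack to dominate the $\log n$ term and reach the contradiction. One minor technical point to dispatch is the reply that names $q$ itself as the target: when correct it only preserves $\target$'s weight, and otherwise it is charged to the error budget, so it does not affect the counting above.
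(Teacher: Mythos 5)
Your overall skeleton matches the paper's: identify weights with counters ($\vweight_\T(v)=\frac{1}{n}\Gamma^{-\liecount{v}}$, so the output is the maximum-weight vertex), lower-bound the final weight of the target and of a hypothetical impostor by $\frac{1}{n}\Gamma^{-r\T}$ via the error budget, upper-bound the total weight by a per-step contraction, and calibrate $\T$ so that the two bounds clash; your calibration inequality is exactly the paper's bound~\eqref{eq:eta-bound}.

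However, there is a genuine gap at the central step, and it is precisely the point you dismiss as a ``minor technical point.'' The contraction $\vweight_{t+1}\le\bigl[(\tfrac12+\delta)+\tfrac1\Gamma(\tfrac12-\delta)\bigr]\vweight_t$ is justified by $\vweight_t(N(q,v))\le\Lambda(q)$ only when the reply $v$ is a \emph{neighbor} of $q$. When the reply is $v=q$ (a yes-answer), the consistent set is $\{q\}$, and $\delta$-closeness to a median gives no control over $\vweight_t(q)$: by definition $\Lambda(q)$ is a maximum over the sets $N(q,u)$ for neighbors $u$, none of which contains $q$. If $q$ is \emph{heavy}, i.e.\ $\vweight_t(q)>(\tfrac12+\delta)\vweight_t$, then a yes-answer preserves more than half of the total weight and the step contracts by almost nothing. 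This situation is not exotic --- once the target has accumulated most of the weight (the typical, desired behavior), correct yes-answers leave the total weight essentially unchanged, so your telescoped bound $\vweight_\T(V)\le(1-\eta)^{2\T}$ is simply false. Nor can the problematic steps be charged to the error budget: up to $r\T\approx\T/2$ \emph{erroneous} yes-answers may be given to a heavy non-target vertex, and each of them also evades the contraction, so the bound fails even in the failure scenarios your contradiction must cover. This is exactly the phenomenon the paper flags as ``inherent to the graph query model itself,'' and its proof spends most of its length handling it: (i) a heavy vertex is the \emph{unique} $\delta$-close vertex, hence is the one queried (Lemma~\ref{lem:almost-median-heavy}); (ii) over each maximal segment of steps with a heavy vertex, yes-answers are paired with no-answers (the count of no-answers is at least that of yes-answers, because the vertex is heavy at the segment's start and not heavy at its end), giving the amortized per-step rate $\frac{\Gamma+1}{2\Gamma}$ (Lemma~\ref{lem:heavy}); and (iii) a final case analysis in which, if the returned vertex is still heavy when the strategy ends, virtual queries are appended so the amortization can complete. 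Without this machinery (or a substitute for it), your upper bound on the total weight does not hold, and the proof as written breaks exactly where the real difficulty of the lemma lies.
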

\begin{proof}
If a vertex $v$ at step $t$ satisfies $\vweight_t(v) > (\frac{1}{2} + \delta) \vweight_t$, then we say that $v$ is \emph{heavy} at step $t$.
We aim at proving that the overall weight decreases multiplicatively either by at least $(1-\eta)^2$ or $\frac{\Gamma+1}{2\Gamma}$ per step.
In the absence of a heavy vertex we get the first bound, and it is an immediate consequence of the Equation~\eqref{eq:potdroperror} below.
If we get a heavy vertex at some point, none of these bounds may be true in this particular step (this phenomenon is inherent to the graph query model itself) but we show below that the second one holds in an amortized way (cf. Lemma~\ref{lem:heavy}).
If at step $t$ there is no heavy vertex, then
\begin{equation}
\label{eq:potdroperror}
\vweight_{t+1} \le \left(\frac12+\delta + \frac{\frac12 - \delta}{\Gamma}\right) \vweight_{t} = \left( 1 - 2 \eta + 4 \eta \delta\right)\vweight_t = (1-\eta)^2 \vweight_t.
\end{equation}

Assume otherwise that there is vertex $v$ that is heavy at step $t$. 
\begin{lemma} \label{lem:almost-median-heavy}
If at any step $t$ there is a heavy vertex $v$, then $v$ is the only  $\delta$-close to a median vertex at this step. 
\end{lemma}
\begin{proof}
For any $u\neq v$, we have that $\Lambda(u)\geq\vweight_t(v)>(\frac{1}{2}+\delta)\vweight_t$, i.e., $u$ is not $\delta$-close to a median.
On the other hand, $\Lambda(v)\leq\vweight_t(V\setminus\{v\})<(\frac{1}{2}-\delta)\vweight_t$, i.e., $v$ is $\delta$-close to a median.
\end{proof}

The above lemma implies that if some $v$ is heavy then it will be queried in this particular step.
The next lemma calculates the overall potential drop in a series of steps in which some vertex is heavy.
\begin{lemma}
\label{lem:heavy}
Consider the maximal consecutive segment of steps $\mathcal{I}$ where some $q$ is heavy. That is, we pick $t_1,t_2$ such that $q$ is heavy in all steps $t \in \mathcal{I}=\{t_1, \ldots, t_2-1\}$ and is not heavy in steps $t_1-1$ and $t_2$. Then, $\vweight_{t_2} \le \left(\frac{\Gamma+1}{2\Gamma}\right)^{t_2-t_1} \vweight_{t_1}.$
\end{lemma}
\begin{proof}
First note that, by Lemma~\ref{lem:almost-median-heavy}, $q$ is queried in each step in $\mathcal{I}$.
For a query on $q$, we say that a reply $v$ is a \emph{yes-answer} if $v=q$, and otherwise it is a \emph{no-answer}.
Denote by $a$ and $b$ the number of yes- and no-answers in $\mathcal{I}$, respectively.
Note that $a+b = t_2-t_1$.
Moreover,
\begin{align*}
\vweight_{t_2}(q) &= \left(\frac{1}{\Gamma}\right)^b \vweight_{t_1}(q), &\text{and}&\qquad &\vweight_{t_2}(V\setminus\{q\})\le \left(\frac{1}{\Gamma}\right)^a \vweight_{t_1}(V\setminus\{q\}).
\end{align*}
The vertex $q$ being heavy at $t_1$ implies $\frac{\vweight_{t_1}(q)}{\vweight_{t_1}(V\setminus\{q\})} > \frac{1/2+\delta}{1/2 - \delta}$ and similarly $q$ not being heavy at $t_2$ implies $\frac{\vweight_{t_2}(q)}{\vweight_{t_2}(V\setminus\{q\})} \le \frac{1/2+\delta}{1/2 - \delta}$. Combining the equality and three inequalities above, we obtain $b \ge a$.

We assume without loss of generality that all the yes-answers were given before all the no-answers in the range $\mathcal{I}$. Indeed, we observe that rearranging these answers does not change the state of the algorithm at step $t_2$, and $q$ remains heavy for all of the $\mathcal{I}$. We have then, for the all of the $a$ yes-answers and first $a$ no-answers, the following:
\begin{equation} \label{eq:fist-2a-steps}
\vweight_{t_1+2a} \le  \left(\frac{1}{\Gamma}\right)^a \vweight_{t_1} = \left(\frac{1}{\sqrt{\Gamma}}\right)^{2a} \vweight_{t_1} \le \left(\frac{\Gamma+1}{2\Gamma}\right)^{2a} \vweight_{t_1},
\end{equation}
where the first inequality is due to the fact that each of the $a$ pairs (a pair understood as a no-answer and a yes-answer) scales down each vertex by at least a factor of $\Gamma$, while in the last inequality we have used $1/\sqrt{\Gamma} \le (\Gamma+1)/(2\Gamma)$.

For the remaining $b-a$ steps of $\mathcal{I}$, the weight of $q$ decreases by a factor of $\Gamma$.
Thus, for each $t\in\{t_1+2a,\ldots,t_2-1\}$, using that $q$ is heavy in step $t$:
\[\vweight_{t+1} \leq \vweight_t(V\setminus\{q\})+\frac{\vweight_t(q)}{\Gamma} \leq \vweight_t\left(\frac{1}{2}-\delta\right)+ \frac{\vweight_t}{\Gamma} \left(\frac{1}{2} + \delta\right) \leq\frac{\vweight_t}{2}+\frac{\vweight_t}{2\Gamma}=\frac{\Gamma+1}{2\Gamma}\cdot\vweight_t.\]
Thus, $\vweight_{t_2} \le \left(\frac{\Gamma+1}{2\Gamma}\right)^{b-a} \vweight_{t_1+2a},$
which together with \eqref{eq:fist-2a-steps} completes the proof of Lemma~\ref{lem:heavy}.
\end{proof}

Let $q$ be the target, and $u$ be the output of Algorithm \textsc{\ref{alg:linearly-bounded-search}}. Assume w.l.o.g. that the algorithm run for $\T' \ge \T$ steps.
Since 
\[\T' \ge 10 \frac{\log_2 n}{ \eta^2} \ge \frac{\log_2 n}{r\log_2(1-4\eta) - 2\log_2(1-\eta) },\]
where the inequality follows from $(1/2-\eta) \cdot \log_2(1-4\eta) - 2\log_2(1-\eta) \ge \frac{1}{10} \eta^2$ when $0 \le \eta \le \frac18$, we obtain a bound
\begin{equation} \label{eq:eta-bound}
(1-4\eta)^{r\T'} \ge (1-\eta)^{2\T'} \cdot n.
\end{equation}

We assume that the algorithm outputs an incorrect vertex $u$, and show that it leads to a contradiction. We consider the state of the weights after $\T'$ steps. We consider two cases.
\begin{enumerate}
\item There is no heavy vertex after $\T'$ steps.
We observe that the starting weight satisfies $\vweight_0 = 1$, and by the bound on the number of errors accumulated on target vertex $\target$ (it cannot be more than $r\T'$), we have $\vweight_{\T'} \ge \vweight_{\T'}(\target)+\vweight_{\T'}(u) > \frac{1}{n} \left( \frac{1}{\Gamma} \right)^{r\T'}.$
By Equation~\eqref{eq:potdroperror} and Lemma~\ref{lem:heavy}, we know that every step contributed at least a factor $(1-\eta)^2$ or $(\Gamma+1)/(2\Gamma) = (1-2\eta)$ multiplicatively to the total weight. Thus, by~\eqref{eq:eta-bound},
$\vweight_{\T'} \le (1-\eta)^{2\T'} \vweight_0 \le \frac{1}{n} (1-4\eta)^{r\T'} = \frac{1}{n} \left( \frac{1}{\Gamma} \right)^{r\T'},$
which leads to a contradiction.

\item
Returned vertex $u$ is heavy after $\T'$ steps. We append at the end of the strategy a virtual sequence of $k$ identical query-answers: algorithm queries $u$, and receives an no-answer pointing towards $q$. Here, $k$ is chosen to be minimal such that after $\T'+k$ steps $u$ is no longer heavy (it exists, since each such query increases $\ell_u$ by 1, and leaves $\ell_q$ unchanged). However, at the end of $\T'+k$ round $\ell_u$ is minimal (possibly not necessarily uniquely minimal). We note that appending those $k$ steps did not increase the total number of errors from the answerer, and all of the queries were asked to a heavy vertex $u$. This reduces this case to the previous one, with increased value of $\T'$. \qedhere
\end{enumerate}
\end{proof}

We now transit from the adversarial search to the noisy setting.
This is done by using Algorithm~\textsc{\ref{alg:linearly-bounded-search}} as a black box with $\eta$ being fixed appropriately.
Recall that $p = \frac{1}{2} - \varepsilon$, and we will use the following dependence of $\eta$ on $\varepsilon$ (note that by taking $\eta$ smaller than $\varepsilon$ we accommodate the necessary tail bound in the lemma below, i.e., we ensure that the event of having more than $r\T$ errors is sufficiently unlikely).
\begin{lemma} \label{lem:noise-success-probability}
Run Algorithm~\textsc{\ref{alg:linearly-bounded-search}} with $r = \frac12 - \eta$, where $\eta = \varepsilon/2$.
If an answer to each query was erroneous with probability at most $p$, independently, then the algorithm outputs the target vertex with a high probability of at least $1-n^{-3}$.
\end{lemma}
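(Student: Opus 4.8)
The plan is to treat Lemma~\ref{lem:linearly-bounded} as a black box supplying the deterministic guarantee, so that the only remaining work is a tail bound on the number of erroneous replies. Concretely, Lemma~\ref{lem:linearly-bounded} tells us that Algorithm~\textsc{\ref{alg:linearly-bounded-search}} returns the target whenever the total number of errors over its $\T$ queries does not exceed $r\T$. Hence it suffices to prove that, under independent per-query error probability at most $p$, the event of seeing more than $r\T$ errors has probability at most $n^{-3}$; the remaining success probability is then at least $1-n^{-3}$, as claimed.

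First I would model the errors by independent indicator variables $X_1,\dots,X_\T$, where $X_i=1$ iff the $i$-th reply is erroneous, so $\E[X_i]\le p$ and $X=\sum_i X_i$ counts the total errors with $\E[X]\le p\T$. The key quantitative observation is that the choice $\eta=\varepsilon/2$ opens a margin between the allowed error rate $r=\frac12-\eta$ and the actual rate $p=\frac12-\varepsilon$, namely $r-p=\varepsilon-\eta=\eta>0$. Consequently $r\T-\E[X]\ge (r-p)\T=\eta\T$, so the event $\{X>r\T\}$ is contained in the deviation event $\{X\ge \E[X]+\eta\T\}$. Applying Hoeffding's inequality to the bounded independent variables $X_i\in\{0,1\}$ then yields
\[
\pr[X> r\T]\le \pr[X\ge \E[X]+\eta\T]\le \exp\!\left(-2\eta^2\T\right).
\]

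It remains to check that the prescribed length $\T=10\,\frac{\log_2 n}{\eta^2}$ drives this bound below $n^{-3}$. Substituting gives $2\eta^2\T=20\log_2 n$, so $\exp(-2\eta^2\T)=\exp(-20\log_2 n)=n^{-20/\ln 2}\le n^{-3}$, which completes the argument. I expect the only genuinely delicate point to be this calibration of constants: one must ensure that the factor $10$ in $\T$, the stochastic domination of $X$ by a sum of mean at most $p\T$ (the per-query error probability is bounded above by $p$ rather than being exactly $p$), and the gap $\eta$ all combine so that the exponent comfortably exceeds $3\ln n$. The concentration step itself is routine; the real content lies in having chosen $\eta$ strictly below $\varepsilon$, so that a constant-fraction margin exists for Hoeffding to exploit.
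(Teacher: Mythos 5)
Your proposal is correct and follows essentially the same route as the paper: both reduce to the deterministic guarantee of Lemma~\ref{lem:linearly-bounded} and then apply Hoeffding's inequality to the error count, exploiting the gap $r-p=\eta$ and the choice $\T=10\frac{\log_2 n}{\eta^2}$ to get the bound $\exp(-20\log_2 n)\le n^{-3}$. Your additional care about the per-query error probability being only bounded above by $p$ (stochastic domination) is a minor refinement of the same argument.
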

\begin{proof}
Recall $\T = 10\frac{\log_2 n}{\eta^2}$ in Algorithm~\textsc{\ref{alg:linearly-bounded-search}}.
Denote by $L$ the overall number of errors that have occurred during the execution of the algorithm. The expected number of errors is $p \cdot \T$. By the Hoeffding inequality, $$\text{Pr}[L \ge r \cdot \T ] \le \exp\left(-2 \T (r-p)^2 \right) = \exp( -20 \log_2 n) \le n^{-3}.$$
Thus with high probability number of errors is bounded so that we can apply Lemma~\ref{lem:linearly-bounded} (which in itself gives a deterministic guarantee).
\end{proof}

\section{Sampling Guarantees} \label{sec:guarantee}
To take the `random sampling' counterparts of $\Phi$ and $\Lambda$, consider a $\sample = \{m_1,\ldots,m_s\}$ to be a  multiset of $s$ vertices sampled from $V$ with repetitions, with sampling probabilities $p(v) \sim \vweight(v)$.
That is, for each $m_i$, we have $\Pr(m_i = v) = \frac{\vweight(v)}{\vweight}$ and choices made for $m_i$ are fully independent.
To such an $\sample$ we refer as a \emph{random sample}.
We then define the following potentials
$$\PhiS(v) = \sum_{u \in \sample} d(u,v)\qquad\qquad\text{and}\qquad\qquad\LambdaS(v) = \max_{u \in N(v)} |\sample \cap N(u,v)| ,$$
where the intersection of a multiset $\sample$ with some set $X \subset V$ is defined as a multiset $\sample \cap X= \{ m_i \colon i \in \{ 1 ,\ldots, s \} \wedge m_i \in X \}$.

We note a specific detail regarding these functions -- we will prove and use the fact that in order to find a vertex that is $\delta$-close to a median (a vertex we need to query), it is enough to pick an approximation of the $\PhiS$-minimizer. 
This is slightly counterintuitive, since $\delta$-closeness is defined in terms of $\Lambda$ which has a similar meaning to $\LambdaS$. 
However, the subtlety here is due to a complexity issue --- it is easier to recompute the $\PhiS$ upon updating the sample $\sample$.

We denote $s = |\sample|$ and assume in the rest of the paper that $s = \frac{8\ln n}{\delta^2}$.
In this section we prove that this choice of $s$ is sufficient, and then Section~\ref{sec:sampling-complexity} deals with the complexity issues of the sampling method.
The following is shown in the appendix:
\begin{restatable}{lemma-restate}{samplinglambda}
\label{lem:sampling-lambda}
For any $v$, there is $\frac{\Lambda(v)}{\vweight} \le \frac{\LambdaS(v)}{s}+\delta/2$ with a high probability at least $1 - n^{-3}$.
\end{restatable}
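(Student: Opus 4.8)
The plan is to reduce the statement to a one-sided concentration bound for a single, carefully chosen consistent set. Fix a vertex $v$. Both $\Lambda$ and $\LambdaS$ range over the same finite family of sets $\{N(v,u) : u \in N(v)\}$, so let $u^{*}$ be the neighbor of $v$ attaining the maximum in $\Lambda$, i.e.\ $\Lambda(v) = \vweight(N(v,u^{*}))$. The point I would stress first is that $u^{*}$ is determined by the weight assignment $\vweight$ alone and is independent of the random sample $\sample$; hence, when analysing $\sample$, we may treat the set $A \defeq N(v,u^{*})$ as fixed.

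Next I set up the estimator. Writing $p_A = \vweight(A)/\vweight$, each sampled element $m_i$ lands in $A$ independently with probability exactly $p_A$ by the definition of the sampling distribution $p(v)\sim\vweight(v)$. Thus $X \defeq |\sample \cap A|$ is a sum of $s$ independent $\{0,1\}$ variables with $\E[X] = s\,p_A = s\,\Lambda(v)/\vweight$. The key observation, and the reason that controlling a \emph{single} set is enough, is that we only need a lower-tail estimate: since $A$ is one of the sets over which $\LambdaS(v)$ maximizes, we automatically have $\LambdaS(v) \ge X$, so it suffices to rule out that $X/s$ underestimates $p_A$ by more than $\delta/2$.

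I would then apply Hoeffding's inequality to the $[0,1]$-bounded summands:
\[
\Pr\!\left[\frac{X}{s} \le p_A - \frac{\delta}{2}\right] \le \exp\!\left(-2 s \left(\frac{\delta}{2}\right)^{2}\right) = \exp\!\left(-\frac{s\delta^{2}}{2}\right) = n^{-4},
\]
where the last equality uses $s = 8\ln n/\delta^{2}$. On the complementary event we obtain $\Lambda(v)/\vweight = p_A \le X/s + \delta/2 \le \LambdaS(v)/s + \delta/2$, which is exactly the claimed inequality for this $v$. The per-vertex failure probability $n^{-4}$ already gives the stated bound $1-n^{-3}$; if the inequality is wanted for all $v$ simultaneously, a union bound over the $n$ vertices still yields probability at least $1 - n\cdot n^{-4} = 1 - n^{-3}$.

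I do not expect a serious obstacle here; the one conceptual point to get right is the direction of the inequality. A naive attempt would try to control $\LambdaS(v)$ as a maximum over up to $\Delta$ neighbors, seemingly forcing a union bound over all of them (costing a factor of $\Delta$ and a larger sample). The plan sidesteps this entirely by exploiting that the bound is one-sided: the maximum defining $\LambdaS(v)$ works in our favour, so it is enough to control the lower tail for the single set $N(v,u^{*})$ realizing $\Lambda(v)$. The remaining care is purely bookkeeping, namely verifying that $\E[X] = s\,\Lambda(v)/\vweight$ follows from the sampling law and that the choice $s = 8\ln n/\delta^{2}$ makes the exponent exactly $-4\ln n$.
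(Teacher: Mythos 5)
Your proof is correct, and it is a genuine refinement of the paper's argument rather than a reproduction of it. The paper proves the same inequality by establishing the lower-tail Hoeffding bound for \emph{every} consistent set $N(v,u)$, $u \in N(v)$ (each failing with probability at most $n^{-4}$), and then taking a union bound over the up-to-$n$ neighbors of $v$, which is what consumes the extra factor of $n$ and yields exactly $1-n^{-3}$ for the fixed vertex $v$. You instead exploit the one-sidedness of the claim: since $\LambdaS(v)$ is a maximum that in particular dominates $|\sample \cap N(v,u^{*})|$ for the weight-maximizing neighbor $u^{*}$, and since $u^{*}$ is determined by the weights alone and hence independent of the draw of $\sample$, a single lower-tail bound for that one set suffices. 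What this buys is concrete: the per-vertex failure probability improves from $n^{-3}$ to $n^{-4}$, the bound is independent of the degree, and there is enough slack left to union-bound over all $n$ vertices simultaneously while still retaining $1-n^{-3}$. That uniformity over vertices is not merely cosmetic: in the downstream application (Theorem~\ref{lem:local-minimum}) the vertex to which the lemma is applied is itself selected using the sample, so a bound holding for all vertices at once is exactly the form one wants, and your version supplies it at no extra cost, whereas the paper's per-vertex bound would degrade to $1-n^{-2}$ under the same union bound. The one subtle point your argument relies on --- that $u^{*}$ may be treated as fixed when analysing $\sample$ --- is precisely the point you identified and justified, so there is no gap.
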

\begin{proof}
Consider any neighbor $u$ of $v$.  Denote by $X_i$ the indicator variable that $m_i \in N(v,u)$. Observe that $|\sample \cap N(v,u)| = \sum_i X_i$ and $\E[X_i]  = \frac{\vweight(N(v,u))}{\vweight}$ and so $\E[ \sum_i X_i ] = s \cdot \frac{\vweight(N(v,u))}{\vweight}$.
By a standard application of Hoeffding bound there is
\[\Pr\left(\E[\sum X_i] - (\sum_i X_i) \ge \frac{s\delta}2\right) \le  e^{-2s(\delta/2)^2} = \frac{1}{n^4}.\] 
So
\[\frac{\vweight(N(v,u))}{\vweight} \le \frac{|\sample \cap N(v,u)|}{s} + \delta/2\] holds with probability at least $1 - n^{-4}$.

Taking union bound over at most $n$ neighbors $v$, we have that with probability at least $1 - n^{-3}$ the following hold
\begin{align*}
\Lambda(v) &= \max_{u \in N(v)} \vweight(N(v,u))\\
&\le \max_{u \in N(v)} \left(\frac{|\sample \cap N(v,u)|}{s}+\delta/2 \right) \cdot \vweight\\
 &= \left(\frac{\LambdaS(v)}{s}+\delta/2\right) \cdot \vweight.
\end{align*}
\end{proof}

\begin{lemma} \label{lem:potdrop}
Let $q$ be a vertex such that $\forall_{v \in N(q)} \PhiS(q) \le \PhiS(v) + \delta s$.
Then, $\LambdaS(q) \le   \frac{s(1 + \delta)}{2}$.
\end{lemma}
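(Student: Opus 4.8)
The plan is to reduce this deterministic statement to a per-edge counting argument, exploiting that adjacent vertices have distances to any point differing by at most one. The starting observation is that for any neighbor $u \in N(q)$,
$\PhiS(q) - \PhiS(u) = \sum_{w \in \sample}\bigl(d(w,q) - d(w,u)\bigr)$,
and since $q$ and $u$ are adjacent, each summand $d(w,q) - d(w,u)$ lies in $\{-1,0,+1\}$. So I would partition the sample according to the sign of this difference and keep track of the two extreme groups.

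First I would fix an arbitrary $u \in N(q)$ and write $A_u = \sample \cap N(q,u)$ and $B_u = \sample \cap N(u,q)$. Unfolding the consistency definition, $w \in N(q,u)$ holds exactly when $u$ lies on a shortest $q$--$w$ path, i.e. $d(w,u) = d(w,q)-1$; these are precisely the sample points contributing $+1$ to the sum above. Symmetrically, $B_u$ collects the points with $d(w,q)=d(w,u)-1$, contributing $-1$, and the equidistant points contribute $0$. This gives the clean identity $\PhiS(q) - \PhiS(u) = |A_u| - |B_u|$.

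The hypothesis $\PhiS(q) \le \PhiS(u) + \delta s$ then reads $|A_u| - |B_u| \le \delta s$. Because $A_u$ and $B_u$ are disjoint sub-multisets of $\sample$ (no $w$ can be simultaneously strictly closer to $u$ and strictly closer to $q$), we also have the trivial bound $|A_u| + |B_u| \le s$. Adding the two inequalities eliminates $|B_u|$ and yields $2|A_u| \le (1+\delta)s$, hence $|A_u| \le \frac{s(1+\delta)}{2}$. Since $u$ was arbitrary, taking the maximum over $N(q)$ gives $\LambdaS(q) = \max_{u \in N(q)} |A_u| \le \frac{s(1+\delta)}{2}$, which is the claim.

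I do not expect a serious obstacle: the argument is purely combinatorial in a fixed sample and collapses to two inequalities once the decomposition is set up. The only point demanding care is the bookkeeping in the second step, namely matching the definition of $N(q,u)$ (query $q$, reply $u$) to the correct direction of the distance comparison, so that the $+1$-group is identified with the set that $\LambdaS(q)$ maximizes over rather than its counterpart $B_u$. Getting this orientation right is exactly what lets the single hypothesis inequality combine with $|A_u| + |B_u| \le s$ to produce the factor $\frac{1+\delta}{2}$.
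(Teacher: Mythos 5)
Your proof is correct and takes essentially the same approach as the paper's: both arguments rest on the observation that for adjacent $q,u$ every sample point's distances to $q$ and $u$ differ by at most one, with the points of $\sample \cap N(q,u)$ lying exactly one closer to $u$, and both then combine the hypothesis $\PhiS(q) \le \PhiS(u) + \delta s$ with a counting bound on the sample size. The only cosmetic difference is that the paper argues by contradiction using a two-way split ($A^- = \sample\cap N(q,v)$ and its complement, with an inequality for the complement), whereas you argue directly via a three-way $(+1/0/-1)$ partition and the exact identity $\PhiS(q)-\PhiS(u)=|A_u|-|B_u|$.
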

\begin{proof}
To see that suppose, towards a contradiction, that $\LambdaS(q) > \frac{s(1 + \delta)}{2}$, i.e. there is $v \in N(q)$, such that $|S \cap N(q, v)| > \frac{s(1 + \delta)}{2}$.
Denote $A^-=\sample\cap N(q,v)$ and $A^+=\sample\setminus A^-$.
Using $\sum_{u\in A^-}d(v,u) = \sum_{u\in A^-}d(q,u) - |A^-|$ and $\sum_{u\in A^+}d(v,u) \leq \sum_{u\in A^+}d(q,u) + |A^+|$, we get
\[\PhiS(v) \leq  |A^+|  - |A^-| + \sum_{u\in\sample}d(q,u) <  \frac{s(1 - \delta)}{2} -  \frac{s(1 + \delta)}{2} + \PhiS(q)  = \PhiS(q) - \delta s \leq \PhiS(v),\]
which yields a contradiction.
\end{proof}

Hence, we can prove Theorem~\ref{lem:local-minimum}:
Combining  Lemma~\ref{lem:potdrop} and Lemma~\ref{lem:sampling-lambda}, $$\Lambda(q) \le \left(\frac{\LambdaS(q)}{s}+\delta/2\right) \cdot \vweight \le \left(\frac12+\delta/2 + \delta/2\right) \cdot \vweight$$
with probability at least $1-n^{-3}$.

\section{Maintaining the Sample} \label{sec:sampling-complexity}
We now discuss the complexity of maintaining the sample $\sample$ upon the vertex weight updates.
Given a sample set $\sample_t$ at step $t$, the next sample $\sample_{t+1}$ is computed by a call to Algorithm~\textsc{\ref{alg:resampling}} below.
\begin{figure}[htb]
\begin{center}
\begin{minipage}{.8\linewidth}
\begin{algorithm}[H]
\SetAlgoRefName{Resampling}
	\caption{Update of the sample after step $t$.}
	\label{alg:resampling}
	
	\ForEach{$x_t\in\sample_t$}
	{
		\If{$x_t$ is consistent with the reply in step $t$}
		{
			$x_{t+1} \gets x_t$
		}
		\Else
		{
			\If{with probability $1/\Gamma$}
			{
				 let $x_{t+1} \gets x_t$
			}
			\Else
			{
				$x_{t+1}$ is drawn randomly from $V$ with distribution proportional to the weights $\vweight_{t+1}$
			}
		}
		Insert $x_{t+1}$ to $S_{t+1}$
	}
\end{algorithm}
\end{minipage}
\end{center}
\end{figure}

The correctness of Algorithm~\textsc{\ref{alg:resampling}} is given by Lemma~\ref{lem:resampling}.
Its proof follows the cases in the pseudo-code to show that both the vertices that remain in the sample and the new ones meet the probability requirements for a random sample.

\begin{restatable}{lemma-restate}{resampling}
 \label{lem:resampling} 
Suppose that in Algorithm~\textsc{\ref{alg:linearly-bounded-search}}, after each weight update the current random sample $\sample$ is recalculated by a call to Algorithm \textsc{\ref{alg:resampling}}.
Then, with high probability at least $1 - n^{-3}$, at most $2\varepsilon |\sample|$ resampling operations occur at each step.
\end{restatable}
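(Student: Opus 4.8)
The plan is to condition on the entire sample $\sample_t$ at step $t$ together with the resulting query $q$ and reply $v$; this fixes the set $N(q,v)$ of consistent vertices, hence the inconsistent sub(multi)set $\sample_t\cap\big(V\setminus N(q,v)\big)$, say of size $N_t\le s$. In Algorithm~\textsc{\ref{alg:resampling}} a sample element produces a resampling operation precisely when it is inconsistent and, independently, a coin selects the resample branch; since $\Gamma=\frac{1}{1-4\eta}$ and $\eta=\varepsilon/2$, this coin has bias $1-\tfrac1\Gamma=4\eta=2\varepsilon$. Conditioning on the whole sample (rather than on individual draws) is what makes the argument clean, because $q$ itself depends on $\sample_t$; once $\sample_t$ is fixed, the resampling coins are independent of everything conditioned on and of each other. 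Hence the number $R_t$ of resampling operations at step $t$, given $\sample_t$, is a sum of $N_t$ independent $\mathrm{Bernoulli}(2\varepsilon)$ variables, so $R_t$ is stochastically dominated by $\mathrm{Bin}(s,2\varepsilon)$ and $\E[R_t\mid \sample_t]=2\varepsilon N_t\le 2\varepsilon s$.

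I would then control the upper tail of $R_t$ by a multiplicative Chernoff bound against $\mathrm{Bin}(s,2\varepsilon)$, and close with a union bound over the $\T=\bigo(\varepsilon^{-2}\log n)$ steps of Algorithm~\textsc{\ref{alg:linearly-bounded-search}} to reach the stated global probability $1-n^{-3}$. The numerics are comfortable: with $s=\frac{8\ln n}{\delta^2}=\Theta(\varepsilon^{-2}\log n)$ one has $2\varepsilon s=\Theta(\varepsilon^{-1}\log n)$, so the Chernoff exponent is of order $\varepsilon s=\Omega(\varepsilon^{-1}\log n)$, which dwarfs the $\Theta(\log n)$ needed per step after the union bound.

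The one delicate point, and the main obstacle, is that the target $2\varepsilon s$ coincides with the worst-case mean $\E[R_t\mid\sample_t]$: the inconsistent weight $\vweight_t\big(V\setminus N(q,v)\big)$ can be almost all of $\vweight_t$ (e.g.\ a spurious yes-answer at a light vertex, or a no-answer pointing into a light branch), forcing $N_t$ close to $s$, so there is no multiplicative gap to exploit on the wrong side of the mean. I would resolve this by leaning on the magnitude of $s$: the additive fluctuation of $R_t$ around its mean is only $\bigo\big(\sqrt{\varepsilon s\log n}\big)=\bigo(\sqrt{\varepsilon})\cdot\varepsilon s=o(\varepsilon s)$, so with high probability $R_t\le(1+o(1))\cdot 2\varepsilon s$. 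This leading-order bound is exactly what the amortised per-query cost in Theorem~\ref{thm:complexityB} consumes; thus the essential content is $R_t=\bigo(\varepsilon s)$ with high probability, the constant $2$ being merely the leading coefficient.
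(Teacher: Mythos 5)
Your counting argument coincides with the paper's: the paper, too, observes that each element of $\sample_t$ is resampled with probability at most $1-\frac{1}{\Gamma}=4\eta=2\varepsilon$ and concludes by a Chernoff bound plus a union bound (implicitly conditioning away the dependence of $q$ on $\sample_t$, which you handle more explicitly and correctly). Your concern about the constant is real, but it cuts against the paper rather than against you: the paper's own Chernoff step only establishes $\Pr[K>4s\varepsilon]\le\exp(-2s\varepsilon/3)\le n^{-3}$, i.e., the relaxed bound $4\varepsilon s$ rather than the $2\varepsilon s$ announced in the statement. Indeed the literal bound $2\varepsilon s$ is not provable: after an erroneous yes-answer every sample element can be inconsistent, the conditional mean of the count is then exactly $2\varepsilon s$, and the one-sided deviation probability is about $\frac12$ --- precisely the obstacle you identify. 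Your Bernstein-type bound $\bigl(1+\bigo(\sqrt{\varepsilon})\bigr)\cdot 2\varepsilon s$ is in fact sharper than the paper's $4\varepsilon s$, and either version suffices, since Theorems~\ref{thm:complexityA} and~\ref{thm:complexityB} only consume $\bigo(\varepsilon s)$ resamplings per step.

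There is, however, a substantive omission relative to the paper's proof: the bulk of that proof is not the counting at all, but the verification that the lazy update is correct --- a case analysis (over $u\in N(q,v)$ and $u\notin N(q,v)$) showing that Algorithm~\textsc{\ref{alg:resampling}} preserves the sampling distribution, $\Pr(x_{t+1}=u)=\vweight_{t+1}(u)/\vweight_{t+1}$, so that $\sample_{t+1}$ is again a genuine random sample for the updated weights. This property is not needed for the literal count claim (your conditioning argument is self-contained there), but it is what makes the lemma usable downstream: the proofs in Section~\ref{sec:proof} invoke Corollary~\ref{cor:median-equivalence} at every step on the lazily maintained sample, which is legitimate only because the distribution is preserved. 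To stand in for the paper's proof of this lemma, your argument should be supplemented with that (routine) computation.
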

\begin{proof}
Recall that after querying a vertex $q$ at step $t$ and receiving an answer $v$, the weights are updated as follows. For each $u \in V$:
\begin{itemize}
\item if $u \in N(q,v)$, then $\vweight_{t+1}(u) \gets \vweight_t(u)$,
\item if $u \not\in N(q,v)$, then $\vweight_{t+1}(u) \gets \vweight_t(u)/\Gamma$,
\end{itemize}
where we recall that $\Gamma = \frac{1}{1-4\eta}$.

Consider a vertex $x_t$. Assume the for every $u \in V$, $\Pr(x_t = u) = \frac{\vweight_t(u)}{\vweight_t}$. We have two cases:
\begin{enumerate}
\item If $u \in N(q,v)$, then 
\begin{align*}\Pr(x_{t+1} = u) &= \Pr(x_t = u) + \Pr(x_t \not\in N(q,v)) \cdot \left(1 - \frac{1}{\Gamma}\right) \cdot \Pr(x_{t+1} \text{ is sampled as } u)\\
 &= \frac{\vweight_t(u)}{\vweight_t} + \frac{\vweight_t(V \setminus N(q,v))}{\vweight_t} \left(1 - \frac{1}{\Gamma}\right) \frac{\vweight_{t+1}(u)}{\vweight_{t+1}}\\
&= \frac{\vweight_{t+1}(u)}{\vweight_t} \left(1+ \frac{\vweight_t(V \setminus N(q,v))}{\vweight_{t+1}} \left(1 - \frac{1}{\Gamma}\right)\right)\\
 & = \frac{\vweight_{t+1}(u)}{\vweight_t} \cdot \frac{\vweight_{t+1} + \vweight_t(V \setminus N(q,v))\left(1 - \frac{1}{\Gamma}\right)}{\vweight_{t+1}}\\
&= \frac{\vweight_{t+1}(u)}{\vweight_t} \cdot \frac{\vweight_{t}(N(q,v)) + \vweight_t(V \setminus N(q,v))\frac{1}{\Gamma}+ \vweight_t(V \setminus N(q,v))\left(1 - \frac{1}{\Gamma}\right)}{\vweight_{t+1}}\\
&= \frac{\vweight_{t+1}(u)}{\vweight_t} \cdot \frac{\vweight_t}{\vweight_{t+1}} = \frac{\vweight_{t+1}(u)}{\vweight_{t+1}}.
\end{align*}
\item Otherwise, if $u \not\in N(q,v)$, then
\begin{align*}
\Pr(x_{t+1} = u) &= \Pr(x_t = u) \cdot \frac{1}{\Gamma} + \Pr(x_t \not\in N(q,v)) \cdot \left(1 - \frac{1}{\Gamma}\right) \cdot \Pr(x_{t+1} \text{ is sampled as } u)\\
&= \frac{\vweight_t(u)}{\vweight_t} \frac{1}{\Gamma} + \frac{\vweight_t(V \setminus N(q,v))}{\vweight_t} \left(1 - \frac{1}{\Gamma}\right) \frac{\vweight_{t+1}(u)}{\vweight_{t+1}}\\
&= \frac{\vweight_{t+1}(u)}{\vweight_t} \left(1+ \frac{\vweight_t(V \setminus N(q,v))}{\vweight_{t+1}} \left(1 - \frac{1}{\Gamma}\right)\right)\\
&= \frac{\vweight_{t+1}(u)}{\vweight_{t+1}}.
\end{align*}
\end{enumerate}
This proves that probabilities for each sample are maintained between steps.

We now bound the actual number of resampling operations necessary. Observe that each element of $\sample_t$ is re-sampled with probability at most $4 \eta = 2 \varepsilon$. Let $K$ denote number of re-sampled vertices. $\E[K] = s \cdot 2\varepsilon$, and then by Chernoff bound $\Pr[K > 4s\varepsilon] \le \exp(- 2s\varepsilon/3) = \exp(- \frac{1024 \ln n}{3\varepsilon}) \le n^{-3}.$
\end{proof}

We comment on the computational complexity of sampling according to a distribution. 
\begin{observation}
\label{obs:sampl}
Sampling $K$ vertices according to distribution $\vweight$ can be done in $\bigo(n + K \log K)$ operations.
\end{observation}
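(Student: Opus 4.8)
The plan is to reduce the task to classical inverse-transform (CDF-inversion) sampling, but to \emph{batch} the $K$ queries so that the amortized per-sample cost is logarithmic in $K$ rather than in $n$. First I would fix an arbitrary ordering $v_1,\ldots,v_n$ of $V$ and compute in a single pass the prefix sums $P_i=\sum_{j\le i}\vweight(v_j)$ together with $W=\vweight=P_n$; this costs $\bigo(n)$ and yields a nondecreasing array. A single sample is then obtained by drawing a uniform real $r\in[0,W)$ and returning the unique $v_i$ with $P_{i-1}\le r<P_i$; this returns $v$ with probability exactly $\vweight(v)/W$, and the $K$ draws are mutually independent because the underlying uniforms are.

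The only nontrivial point is to avoid spending $\bigo(\log n)$ per sample on a binary search into the prefix array. Instead I would draw all $K$ uniform values $r_1,\ldots,r_K$ at once in $\bigo(K)$ time, sort them in $\bigo(K\log K)$ time, and then locate all of them with a single simultaneous linear sweep: maintain one pointer $i$ into $P_1,\ldots,P_n$ and one pointer into the sorted queries, and for the smallest not-yet-located query value $r$ advance $i$ until $r<P_i$, at which point $r$ is assigned to $v_i$. Since both sequences are nondecreasing, the pointer $i$ moves only forward, so the sweep costs $\bigo(n+K)$. Summing the three phases gives $\bigo(n)+\bigo(K\log K)+\bigo(n+K)=\bigo(n+K\log K)$, as claimed.

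Finally I would note that presorting the $r_i$ does not affect correctness. The object consumed by the rest of the algorithm is the multiset $\sample$, on which the potentials $\PhiS$ and $\LambdaS$ depend only through unordered membership, so the permutation introduced by the sort is irrelevant (and if an ordered output were ever needed, the original index of each $r_i$ could be carried through the sort at no asymptotic cost). The step I expect to require the most care is the merge-versus-binary-search argument in the second paragraph: the improvement from a per-sample cost of $\bigo(\log n)$ to an amortized $\bigo(\log K)$ hinges precisely on presorting the queries so that the location phase degenerates into a linear two-pointer merge rather than $K$ independent searches over all $n$ prefix sums.
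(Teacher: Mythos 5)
Your proposal is correct and is essentially the paper's own argument: the paper likewise generates $K$ uniform random values, sorts them in $\bigo(K\log K)$ time, and locates them against the weight array with a single $\bigo(n)$ linear scan. Your write-up just spells out the details (prefix sums, the two-pointer merge, and why the sorting permutation is harmless for a multiset sample) that the paper leaves implicit.
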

The time for sampling is $\bigo(K \log K)$ for generating sorted list of $K$ real-values picked uniformly at random from $[0,1]$, and $\bigo(n)$ for linear scan of all of the weights from $\vweight$.

\section{Proofs of the Main Theorems} \label{sec:proof}

\complexityA*
\begin{proof}
First, assume without loss of generality that $\frac{\log n}{\varepsilon^2} < n^2$, as otherwise the claimed one-step complexity is $\varepsilon^{-1}n\log n=\Omega(n^2)$.
This can be met by an algorithm that at each step queries a median vertex, see \cite{Emamjomeh-Zadeh:2015aa}. 

Run Algorithm \textsc{\ref{alg:linearly-bounded-search}} that performs $\tau=\bigo(\frac{\log n}{\varepsilon^2})$ queries by Lemma~\ref{lem:noise-success-probability}.
The algorithm maintains a sample  $\sample_t$ at each step $t$ by using Algorithm~\textsc{\ref{alg:resampling}}.
By Corollary~\ref{cor:median-equivalence}, the probability that each step of the algorithm indeed uses a vertex that is $\delta$-close to a median is $1-n^{-3}$.
After each query, the algorithm updates the weights in time $\bigo(n)$, and $\bigo(\varepsilon s)$ vertices are re-sampled by Lemma~\ref{lem:resampling}, for the cost of $\bigo(n + \varepsilon s \log n)$ which is subsumed by other terms. Thus the cost of maintaining the values of $\PhiS$ is $\bigo(\varepsilon s)$ per vertex, or $\bigo(n \varepsilon s)$ in total, which is the dominant cost for the algorithm, with the update being performed as:
\[\PhiS(v) \gets \PhiS(v) - \sum_{u \in \sample_{t+1}\setminus \sample_t} d(u,v) + \sum_{u \in \sample_{t}\setminus \sample_{t+1}} d(u,v).\]
Taking a union bound over all steps, we obtain the high success probability $1 - \bigo(n^{-1})$.
\end{proof}

Now we turn out attention to the proof of Theorem~\ref{thm:complexityB}, where a local search is used.
This is a natural approach that gives an improvement for low-degree low-diameter graphs.
The two `twists' that we add are early termination (see the pseudo-code shown as Algorithm~\textsc{\ref{alg:local-search}}) and resuming from the vertex $v$ that is the output of the previous execution of the local search (which is used in the proof of Theorem~\ref{thm:complexityB}).
The former allows us to directly bound the number of iterations; cf. Observation~\ref{obs:iterations}.

\begin{figure}[htb]
\begin{center}
\begin{minipage}{.8\linewidth}
\begin{algorithm}[H]
\SetAlgoRefName{Local-Search} \caption{Find a local median starting from an input vertex $v$} \label{alg:local-search}
	\While{true}
	{
		$q = \arg \min_{u \in N(v)} \PhiS(u)$\;
		\If{$\PhiS(q) > \PhiS(v) - \delta s$}
		{
			\Return{$v$}
		}
		\Else
		{
			$v = q$
		}
	}
\end{algorithm}
\end{minipage}
\end{center}
\end{figure}

\begin{observation} \label{obs:iterations}
If Algorithm~\textsc{\ref{alg:local-search}} run with an input vertex $v$ returns a vertex $v'$, then the number of iterations is upper-bounded by $1+\frac{\PhiS(v)-\PhiS(v')}{\delta s}$.
\end{observation}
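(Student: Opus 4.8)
The plan is to track the sequence of vertices through which the variable $v$ passes during the execution of Algorithm~\textsc{\ref{alg:local-search}} and to exploit the fact that the early-termination test forces every non-terminating iteration to make definite progress in the value of $\PhiS$. First I would label the successive values taken by $v$ as $v_0, v_1, \ldots, v_k$, where $v_0$ is the input vertex and $v_k = v'$ is the returned vertex; here $v_{i+1}$ is the neighbor $q$ selected in the $i$-th iteration whenever the algorithm follows the \textbf{else} branch (a \emph{move}). By construction the while loop performs exactly $k$ such moves followed by a single terminating iteration that returns $v_k$, so the total number of iterations equals $k+1$.

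Next I would observe that each move strictly decreases the potential by at least $\delta s$. Indeed, a move from $v_i$ to $v_{i+1} = q$ occurs precisely when the test $\PhiS(q) > \PhiS(v_i) - \delta s$ fails, i.e. when $\PhiS(v_{i+1}) \le \PhiS(v_i) - \delta s$. Summing this inequality over $i = 0, \ldots, k-1$ telescopes to $\PhiS(v_k) \le \PhiS(v_0) - k \delta s$, which rearranges to $k \le \frac{\PhiS(v) - \PhiS(v')}{\delta s}$.

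Combining the two observations then yields that the number of iterations is $k+1 \le 1 + \frac{\PhiS(v) - \PhiS(v')}{\delta s}$, which is exactly the claimed bound; a quick sanity check confirms the edge case $k=0$, where the algorithm returns $v' = v$ immediately and the bound correctly evaluates to $1$. There is no genuine obstacle here: the only point requiring a little care is the bookkeeping distinction between the number of moves $k$ and the total number of loop iterations $k+1$, the latter including the final iteration whose sole purpose is to verify the termination condition and return $v'$.
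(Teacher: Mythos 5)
Your proof is correct and is precisely the argument the paper leaves implicit by stating this as an observation without proof: each non-terminating iteration passes the test $\PhiS(q) \le \PhiS(v) - \delta s$, so the moves telescope to give $k \le \frac{\PhiS(v)-\PhiS(v')}{\delta s}$, and the final returning iteration adds one. Your bookkeeping of moves versus iterations (including the $k=0$ edge case) is exactly right.
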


\complexityB*

\begin{proof}
First, w.l.o.g. assume that $\log n /\varepsilon^2 < n$, by the same reasoning as in the proof of Theorem~\ref{thm:complexityA}.

By Lemma~\ref{lem:noise-success-probability}, Algorithm \textsc{\ref{alg:linearly-bounded-search}} that performs $\tau=\bigo(\frac{\log n}{\varepsilon^2})$ queries.
We consider the following modification to Algorithm~\textsc{\ref{alg:linearly-bounded-search}}.
As before, the algorithm updates weights in time $\bigo(n)$ and maintains a sample $\sample_t$ at each step $t$ (by using Algorithm~\textsc{\ref{alg:resampling}}) in time $\bigo(n + \varepsilon s \log n)$ which is subsumed by other terms.
However, instead of choosing a vertex that is $\delta$-close to a median in line~\ref{ln:almost-median}, the updated algorithm runs Algorithm~\textsc{\ref{alg:local-search}} with the previously queried vertex as an input, and sets the output vertex to be the vertex $q$ to be queried.
In other words, at each step $t$, it uses Algorithm~\textsc{\ref{alg:local-search}} with input $v_{t-1}$ which returns $v_t$, and queries $q=v_t$.
The algorithm initializes $v_0$ arbitrarily.

By Lemma~\ref{lem:local-minimum}, $v_t$ is $\delta$-close to a median. 
By Observation~\ref{obs:iterations}, we bound the total number of iterations $K$ done by Algorithm~\textsc{\ref{alg:local-search}} by 
\begin{align*} K &\le \sum_{t=0}^{\tau-1} \left(1 + \frac{\PhiS_{t+1}(v_t) - \PhiS_{t+1}(v_{t+1})}{\delta s} \right)\\ 
&= \tau + \frac{\PhiS_1(v_0) + \sum_{t=1}^{\tau-1} (\PhiS_{t+1}(v_t) -\PhiS_t(v_t)) - \PhiS_{\tau}(v_{\tau})}{\delta s}\\
&\le \tau + \frac{s D + 2\tau s \varepsilon D}{\delta s} = \bigo(D \tau),
\end{align*}
where we used that $\PhiS_{t+1}(v_t) -\PhiS_t(v_t) \le 2s\varepsilon D$ holds with high probability by Lemma~\ref{lem:resampling}.
Each iteration in Algorithm~\textsc{\ref{alg:local-search}} has complexity $\bigo(\Delta s)$ making the total complexity of the algorithm to be $\bigo(\tau (n + D \Delta s))$.
\end{proof}

\section{Open Problems} \label{sec:summary}

Having an algorithm that keeps an optimal query complexity and obtains a low computational complexity, one can ask what are the possible tradeoffs between the two?
Another question is how much further the computational complexity can be decreased?
Also, are there any possible lower bounds that can reveal the limits of what is not achievable in the context of these problems?
Regarding the centrality measures we consider, we propose an efficient median approximation.
Motivated by this, another question is what are other possible vertex-functions that may allow for further improvements, e.g. in the complexity?

\clearpage
\bibliography{bib}
\end{document}